\newcounter{count}[section]
\newtheorem{theorem}[count]{Theorem}
\newtheorem{lemma}[count]{Lemma}
\newtheorem{proposition}[count]{Proposition}
\theoremstyle{definition}
\newtheorem{definition}[count]{Definition}
\newcommand{\graphheight}{120}
\begin{document}
\title{Two Algorithms for the Package-Exchange Robot-Routing Problem}
\date{written June 26, 2015}
\author{James Drain}
\address{Department of Mathematics 
\\Dartmouth College}
\email{james.r.drain.17@dartmouth.edu}

\maketitle

\begin{abstract} We present and analyze two new algorithms for the package-exchange robot-routing problem (PERR): restriction to inidividual paths (RIP) and bubbletree. RIP provably produces a makespan that is $O(\text{SIC}+k^2)$, where SIC is the sum of the lengths of the individual paths and $k$ is the number of robots. Bubbletree produces a makespan that is $O(n)$, where $n$ is the number of nodes. With optimizations bubbletree can also achieve a makespan of $O((k+l)\text{log}k)$, where $l$ is the longest path from start to goal in the bubbletree subgraph.
\end{abstract}
\section{Statement of the Package-Exchange Robot-Routing Problem}
Let the undirected, finite graph $G$ have sets of start and goal nodes $\{s_1,\dots,s_k\}$ and $\{g_1,\dots,g_k\}$. Initially there is one phantasmagorical ghost $r_i$ at each $s_i$. In each timestep, the ghosts are allowed to either stay at their current node or move to a neighboring node. No two ghosts can occupy the same node at the same timestep, but they are allowed to pass through each other and swap nodes. The problem is to find a sequence of moves that makes it so that all the ghosts are at their goals at as early a timestep as possible. For a given sequence of moves, that first time when the ghosts are all at their goals is known as the makespan.
\section{Applications}
There are a few, less ethereal applications. We can replace the $k$ ghosts with $k$ transferable payloads carried by as many robots: packages are allowed to be exchanged between neighboring robots, but no robot can hold two packages simultaneously. Indeed, we can then move the packages along exactly as we would the ghosts, with the understanding that the packages are being carried. We can imagine applications to ride-sharing (or taxis) with passenger transfers (Coltin and Veloso 2014) or package delivery with robots in offices
(Veloso et al. 2015), although it is unlikely for the conditions of the problem to be exactly met. 
\par PERR can also be seen as a relaxed form of the multi-agent path-finding problem, MAPF. MAPF is identical to PERR, except that swaps are disallowed. Not all instances of MAPF are solvable; for example MAPF is unsolvable on a graph with only two nodes and two agents that need to change places. In contrast, PERR is always solvable, and both of this paper's algorithms solve all instances of PERR: they are complete. MAPF is a well-studied problem. It was first stated as the pebble-passing problem (Kornhauser 1984), whereas PERR was first posed in 2016 (Hang). The current state of the art optimal solver for MAPF is CBS (Sharon et al. 2013), which can easily be modified to apply to PERR. 
\par Even with these assumptions and relaxations, PERR (and MAPF) are NP-hard to approximate to within a factor of $4/3-\epsilon$ (Hang 2016) More specifically, we can encode variants of $3-\text{SAT}$ as instances of PERR (and MAPF) with makespans $3$ or $4$ depending on whether the corresponding boolean expression is satisfiable. 
Thus we turn to polynomial algorithms that are asymptotically optimal to within a cruder factor. 
\par The restriction to individual paths (RIP) algorithm is an intuitive algorithm that is empirically near-optimal on graphs with relatively few robots. 
We also present the bubbletree algorithm, which takes its inspiration from the proof of parallel bubblesort. The basic bubbletree produces provably bounded-suboptimal results for low degree, robot-dense tree graphs with many robots in which the longest individual path length is proportional to the number of robots. More precisely, it gives a makespan that is $O(dn)$, where $n$ is the number of nodes and $d$ is the maximum degree. An optimization produces an $O(n)$ and $O(\text{diam}*\text{log}k+k)$ makespan for all graphs, where $k$ is the number of robots and $\text{diam}$ is the diameter of the graph.
There are however families of graphs that cannot be restricted to a tree while maintaining near-optimal makespans. Incidentally, RIP performs perfectly on the counterexamples we present.

\section{Restriction to Individual Paths (RIP) Algorithm}
The idea of RIP is to form optimal paths $P_{0,i}$ for each robot at timestep $0$ and naively advance them along their paths if possible. Let $P_{t,i}$ denote the shortest individual path for $r_i$ to follow to reach its goal node starting at time $t$. This can easily be accomplished using $A^\ast$, avoiding the combinatorial explosion of planning for many robots. We can also index into paths. Thus $P_{t,i,0}$ denotes $r_i$'s location at time $t$, and $P_{t,i,1}$ denotes the node $r_i$ wants to go to at timestep $t$. When unambiguous, we abuse notation and conflate robots with their indices.
\par \begin{definition}[advance] We say that robot $r_a$ \textit{advances} at time $t$ if $r_a$ moves towards its goal along its shortest path at time $t$ i.e. $P_{t+1,a}\subsetneq P_{t,a}$.
\end{definition} If we advance robot $r_i$ at time $t$, then we delete $P_{t,i,0}$ from $P_{t,i}$ to form $P_{t+1,i}$. If there is a cycle of robots $r_1,\dots,r_m$ such that $r_i$ wants to move to the node occupied by $r_{i+1\pmod{m}}$, then we advance those robots simultaneously in a cycle. If $m=2$, then this is a happy swap. 
\begin{definition}[happy swap] Say that $a$ and $b$ \textit{happy swap} if they both decrease the lengths of their shortest paths when they swap with each other. 
\end{definition}
\par Happy swaps contrast with bully swaps.
\par \begin{definition}[bully swap] Say robot $a$ \textit{bully-swaps} $b$ if, when $a$ swaps with $b$, then $a$ shortens its own shortest path, but lengthens $b$'s shortest path. 
\end{definition}
In RIP, we bully swap $r_a$ past $r_b$ only if $P_{t,b}$ is a subset of $P_{t,a}$.
These are known as subset swaps.
 \par \begin{definition}[subset-swap] Say that $a$ \textit{subset-swaps} $b$ if $P_{b,t}\subsetneq P_{a,t}$ and $a$ swaps with $b$. \end{definition}


Further define directed swap.
\begin{definition}[(directed) swap need $(a,b)$] Say that $a$ swap-needs $b$ if, after removing all other robots, it is impossible to advance $a$ and $b$ to their goals without $a$ bully-swapping $b$.
\end{definition}
The advantage of subset swaps $a\leftrightarrow b$ is that they are guaranteed to resolve swap needs $(a,b)$ when $P_{t,b}$ is a parallel subset of $P_{t,a}$ i.e. the nodes in $P_{t,b}$ appear in the same order in $P_{t,a}$. In this case we write $P_{t,b}\prec P_{t,a}$.
If robot $r_j$ is subset-swapped at time $t$ by robot $r_i$, then we prepend $P_{t,i,0}$ to $P_{t,j}$ to form the shortest path $P_{t+1,j}$. 
\par In the pseudocode for RIP, the inner while loop advances robots that can move without cycling or swapping. We then perform subset swaps. Finally we resolve any cycles of unmoved robots. (RIP is complete for any ordering of these three operations.)
\par In general, robots move simultaneously in a given timestep, but we have to plan serially which robots to move. For example, it is possible that $r_x$ occupies $P_{y,t,1}$ at time $t$, but we determine while looping through the unmoved robots that $r_x$ will move now, therein opening a spot for $r_y$ to move to. Thus, in a given timestep, we continuously loop through the unmoved robots until going a loop without moving any new robots. $Q_{t,i}$ denotes the location of $r_i$ at time $t$ in the global product of paths that we find with our algorithm.
\begin{algorithm}[!htbp]
\small
\caption{RIP}
\label{Detail} 
	$t \gets 0$\;
    \For{i=1,\dots,k}
    {
    	$P_{0,i} = shortestPath(s_i,g_i)$\;
    }
    \For{i=1,\dots,k}
    {
        $Q_{t,i}\gets P_{t,i,0}$\;
    }
    \While{some agent is not at its goal}
    {
		mark all robots as unmoved\;
        movedAnyone = true\;
		\While{movedAnyone}
        {
        	movedAnyone $\gets$ false\;
        	\For{$i=1,\dots,k$}
            {
            	\If{$P_{t,i,1}$ is unoccupied}
                {
                    mark $r_i$ as moved\;
                    movedAnyone $\gets$ true\;
                    delete $P_{t,i,0}$\;
                }
            }
        }
 	\For{$i=1,\dots,k$}
  	{
    	\If{$r_i$ is unmoved}
        {
    	\If{there is an unmoved robot $r_j$ occupying $P_{t,i,1}$}
        {
          \If{$P_{t,j}\prec P_{t,i}$}
          {
          mark $r_i$ and $r_j$ as moved\;
          prepend $P_{t,i,0}$ to $P_{t,j}$ to form $P_{t+1,j}$\;
          delete the zeroth entry of $P_{t,i}$ to form $P_{t+1,i}$\;
          }
    	}
       }
   	}
        resolve any cycles of unmoved robots by moving all involved robots in a circle, updating the $P$'s appropriately and marking those robots as moved\;
        \For {$i=1,\dots,k$}
        {
            $P_{t+1,i} \gets P_{t,i}$\;
        	$Q_{t+1,i}\gets P_{t+1,i,0}$\;
        }
        $t \gets t+1$\;
}
Return $Q$\;

\end{algorithm}



\section{Proofs of Completeness and $O(k^2+\text{SIC})$ Makespan for RIP}
Here $\text{SIC} = \sum_{i=1}^kd(s_i,g_i)$ is the sum of individual costs heuristic. Since PERR is discretized, the distance $d(s_i,g_i)$ is simply the least number of edges on a path from $s_i$ to $g_i$.
\par We first need to verify that, if not all robots are at their goal, then at least one robot moves. Define a robot $r_i$'s \textit{bottleneck set} $B_i$ at time $t$ to be the set of robots that want to advance to the same node as $r_i$. Define $f:[k]\rightarrow [k]\cup V$ by letting $f_t(i)$ denote the (index of) the robot occupying $P_{i,t,1}$ if there is one, and otherwise letting it denote $P_{i,t,1}$ itself. At the beginning of timestep $t$, start at any robot $r_a$ that is not at its goal, and follow the chain $r_a,f_t(a),f_t^{(2)}(a),\dots, f^{(m)}_t(a)$ until first reaching an unoccupied node in the graph or else repeating a robot. If this chain terminates with $f^{(m)}_t(a)\in V$, then either the robot $f^{(m-1)}_t(a)$ will claim that vertex, or else some other robot in its bottleneck set will claim that vertex. If the chain terminates with $f^{(m)}_t(a)$ equaling an earlier $f^{(m-i)}_t(a)$ with $i>1$, then either some subset of $f^{(m-i)}_t(a),\dots,f^{(m)}_t(a)$ will move before detecting cycles, or else they will all move in a cycle. Finally, if $f^{(m)}_t(a) = f^{(m-1)}_t(a)$, then some robot in $f^{(m-2)}_t(a)$'s bottleneck set $B$ will move: some robot in $B$ will swap past $f^{(m)}_t(a)$, or else every robot in $B$ will be swapped.
\par Now that we know some robot moves every timestep, let us define the notion of swap risk.
\par\begin{definition}[(directed) swap risk] Say there is a \textit{(directed) swap risk} $(a,b)$ if there is already a swap need $(a,b)$ or if, after removing all other robots, it is possible to introduce a swap need $(a,b)$ by advancing $a$ and $b$ along their respective shortest paths.
\end{definition}
\par The proof of completeness works by defining the potential function $\Phi = \sum_{i=1}^k d(r_i,g_i) + \sum_{i=1}^k\sum_{j=1}^k s(i,j)$, where $s(i,j) = 1$ if $i$ swap-risks $j$, and otherwise $s(i,j) = 0$ (we always have $s(i,i) = 0$.)  Observe that $\Phi \leq k^2+\text{SIC}$. It is sufficient to show that a single action decreases $\Phi$ by at least $1$; performing $m$ actions simultaneously will thus decrease the potential by at least $m$.

\par By the definition of swap risk, neither advancing a robot along its unimpeded path nor resolving a cycle of robots introduces a swap risk. Advancing an unimpeded robot and resolving a cycle decrease the sum of distances. Thus both actions decrease the potential and we only have to consider bully swaps, which we know do not increase the sum of distances.
\par Before continuing, we need to introduce the notion of parallel paths.
\par \begin{definition}[parallel paths] Let $P_{t,c} \cap P_{t,d} = \{a_0,\dots,a_n\}$ be nonempty and let those nodes appear in the same order $a_0,\dots,a_n$ in both $P_{t,c}$ and $P_{t,d}$. Then $P_{t,c}$ and $P_{t,d}$ are said to be parallel paths and we write $P_{t,c} \rightrightarrows P_{t,d}$ (or, equivalently, $P_{t,d} \rightrightarrows P_{t,c}$). Conversely, if $n\geq 1$ and the nodes appear in that order in $P_{t,c}$ but in opposite order in $P_{t,d}$, then we say the two paths are antiparallel and write $P_{t,c} \rightleftarrows P_{t,d}$ (or, equivalently, $P_{t,d} \rightleftarrows P_{t,c}$).  If $n=1$, the paths are more specifically trivially parallel.
\end{definition}
\par \begin{lemma} Shortest paths are either parallel or antiparallel (or nonintersecting).
\end{lemma}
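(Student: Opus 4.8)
The plan is to label the common nodes by the order in which they are visited along one of the two paths and then to show that the only two orders in which they can be visited along the other path are that same order and its exact reverse. The single structural fact I will lean on is the \emph{subpath property} of shortest paths: if $x$, $y$, $z$ are distinct nodes occurring in that order along a shortest path, then the portion of the path from $x$ to $z$ is itself a shortest $x$--$z$ path passing through $y$, so $d(x,z) = d(x,y) + d(y,z)$. I will also use repeatedly that distinct nodes lie at positive distance, so $d(u,v) = 0$ forces $u = v$.

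Concretely, if $P_{t,c} \cap P_{t,d}$ is empty the paths are nonintersecting and there is nothing to prove, and if it is a single node they are trivially parallel; so assume there are at least two common nodes and label them $a_0,\dots,a_n$ (with $n \ge 1$) in the order they appear along $P_{t,c}$. Since shortest paths are simple, any three nodes $a_i,a_j,a_k$ occur along $P_{t,d}$ in a well-defined order with exactly one of them in the middle. The heart of the argument is the claim that for every triple of indices $i<j<k$ the node $a_j$ --- which is the middle of the three along $P_{t,c}$ --- is also the middle along $P_{t,d}$.

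To prove the claim I apply the subpath property once on each path. Along $P_{t,c}$ it gives $d(a_i,a_k) = d(a_i,a_j) + d(a_j,a_k)$. Now suppose toward a contradiction that $a_j$ is \emph{not} the middle node along $P_{t,d}$. Then either $a_i$ or $a_k$ is, and the subpath property along $P_{t,d}$ yields either $d(a_j,a_k) = d(a_j,a_i) + d(a_i,a_k)$ or $d(a_i,a_j) = d(a_i,a_k) + d(a_k,a_j)$. Substituting the $P_{t,c}$ identity into either equation collapses it to $2d(a_i,a_j)=0$ or $2d(a_j,a_k)=0$, forcing two distinct nodes to coincide --- a contradiction. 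Hence $a_j$ is the middle node along $P_{t,d}$ as well.

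Applying the claim to consecutive triples $(i,i+1,i+2)$ shows that the sequence of positions of $a_0,\dots,a_n$ along $P_{t,d}$ has no strict interior local maximum or minimum; a sequence of distinct positions with no interior extremum is monotone, so the common nodes are visited along $P_{t,d}$ either in the order $a_0,\dots,a_n$ (whence $P_{t,c}\rightrightarrows P_{t,d}$) or in the reverse order $a_n,\dots,a_0$ (whence $P_{t,c}\rightleftarrows P_{t,d}$). I expect the only delicate points to be bookkeeping ones: treating shortest paths as simple paths so that ``middle node'' is well defined, and checking that the degenerate cases of zero or one shared node are correctly absorbed into the ``nonintersecting'' and ``trivially parallel'' clauses. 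The metric forcing step itself is short once the subpath property is in hand, so the main conceptual content is simply recognizing that additivity of distances along one shortest path rigidly constrains the visiting order along any other.
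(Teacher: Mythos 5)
Your proof is correct and follows essentially the same route as the paper's: reduce to the claim that any three common nodes must have the same middle node on both paths, force this with the metric property of shortest subpaths (your additive identity $d(a_i,a_k)=d(a_i,a_j)+d(a_j,a_k)$ is just a sharper form of the paper's strict inequalities), and then extend to the whole intersection by sliding over consecutive triples. Your monotonicity phrasing of the final step is a bit cleaner than the paper's informal induction, but the argument is the same.
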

\par \begin{proof} Let $P_{t,c}$ and $P_{t,d}$ be two shortest paths and let $\{x,y,z\}$ appear in both paths, and let $x,y,z$ appear in that order in $P_{t,c}$. It is sufficient to show that $\{x,y,z\}$ must appear in the same or opposite order in $P_{t,d}$; then we could inductively show that the whole paths are (anti-)parallel by showing that the first three common nodes of $P_{t,c}$ have (anti-)parallel counterparts, then common nodes two through four must also have (anti-)parallel counterparts, then common nodes three through five etc. 
Thus, for the sake of contradiction and wlog, assume that $x$ comes between $y$ and $z$ in $P_{t,d}$. By the shortest paths hypothesis applied to $P_{t,d}$, we have $d(x,z) < d(y,z)$. But applying the shortest paths hypothesis to $P_{t,c}$ gives $d(y,z)<d(x,z)$, contradiction. 
\end{proof}
\par Note that if two shortest paths are antiparallel and we extend one of them to another shortest path, then they will still be antiparallel. Conversely, if two paths are parallel, then we can remove extremal nodes while maintaining parallelism. However, neither parallelism nor antiparallelism is transitive in general. For example, let $P_1 = a,b,c$, $P_2 = a,b,d$, and $P_3 = d,b,c,e$. Then $P_1\rightrightarrows P_2$ and $P_1\rightrightarrows P_3$, but $P_2 \leftrightarrows P_3$.
\par Parallel paths give a better sense of what subset swapping does in RIP. If $P_a \leftrightarrows P_b$, then the only way $a$ can subset-swap $b$ is if $P_b$ has only two nodes: the node $v_b$ occupied by $b$, and the node $g_b$ occupied by $a$. In this case, $a$ immediately brings $b$ to its goal. If $P_b\rightrightarrows P_b$ and $a$ subset-swaps $b$, then this is a bully swap as $P_{t+1,b}\subset P_{t,a}$ is a shortest path, and longer than $P_{t,b}$.
\par Let $a$ bully swap $b$ and let $x\neq a,b$. When $a$ subset-swaps $b$, they resolve the swap risk $(a,b)$. This is proved after formally characterizing swap risks in proposition \ref{Swap Risk Characterization}. 

Clearly no swap risk $(x,y)$ is introduced for $y\neq a,b$. Since $a$ advances, there is no new swap risk between $x$ and $a$, either. Since $g_a$ is not on $P_{t+1,b}$, there is also no new swap risk $(b,a)$ (We will prove this after characterizing swap risks.)
It is thus sufficient to prove the following lemma:
\par \begin{lemma}[Main Monovariant Lemma] The only way bully subset-swapping $a$ past $b$ can incur a new swap risk between $x$ and $b$ is if it is a directed swap risk $(b,x)$, and if swapping this way simultaneously resolves a swap risk $(a,x)$.
\end{lemma}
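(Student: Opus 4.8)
The plan is to exploit how rigid the subset-swap configuration is and then to track a single node through it. First I would pin down the geometry. Because $P_{t,b} \prec P_{t,a}$ and both are shortest paths, the lemma that shortest paths are parallel or antiparallel, together with the fact that $P_{t,a}$ is geodesic, forces $P_{t,b}$ to be a \emph{contiguous} subpath of $P_{t,a}$ beginning at $b$'s node $P_{t,a,1}$ and ending at $g_b$: there is no room inside a geodesic for a shortest path on a subset of its nodes to do anything else. Writing $v := P_{t,a,0}$, the subset swap then makes $P_{t+1,b} = (v, P_{t,a,1}, \dots, g_b)$ the initial contiguous subpath of $P_{t,a}$ that terminates at $g_b$, while $P_{t+1,a} = P_{t,a} \setminus \{v\}$ is the remaining tail. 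The structural fact I want is that exactly one node moves: $P_{t+1,b}$ differs from $P_{t,b}$ only by the prepended $v$, and $P_{t+1,a}$ differs from $P_{t,a}$ only by the deletion of that same $v$.

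The heart of the argument is a transfer principle for the node $v$. Any swap risk between $x$ and $b$ that is new at time $t+1$ must be \emph{witnessed} by $v$, since $b$'s path is otherwise unchanged: in the language of the swap-risk characterization (Proposition \ref{Swap Risk Characterization}), a risk absent at time $t$ can only be created through the fresh overlap that $x$'s path now has with the added node $v$ and the direction in which $b$ traverses it. But on the segment $(v, P_{t,a,1}, \dots, g_b)$ robot $b$ at time $t+1$ sits exactly where $a$ sat at time $t$ and travels in the same direction $v \to P_{t,a,1} \to \cdots$. Hence any $v$-witnessed risk between $x$ and $b$ at time $t+1$ is the same risk, in the same direction, that held between $x$ and $a$ at time $t$; and because $a$ now advances off $v$ and $v$ leaves $P_{t,a}$, that very witness is destroyed on $a$'s side. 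This single observation simultaneously produces the new risk and resolves an old one.

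It remains to fix the direction, which is where the real work lies. I would argue that the transferred risk is the directed $(b,x)$ --- equivalently the risk $(a,x)$ present at time $t$ --- rather than $(x,b)$, using that $v$ is the \emph{head} of $P_{t,a}$, the node farthest from the goal along the overtaking segment. Since $b$ merely extends backward to $v$ while keeping its goal $g_b$ fixed, no node is newly interposed in front of $b$ on the way to $g_b$, so the overtaking relation that $v$ can introduce should have $b$, not $x$, as the overtaker. Making this precise needs the characterization's conditions on the relative placement of the two goals, which decide which robot is forced into the aggressor role; I would also invoke, as a template, the already-established fact that $g_a \notin P_{t+1,b}$ rules out the reverse risk in the twin pair $(b,a)$, and run the same computation with $x$ in place of $a$.

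The step I expect to be the main obstacle is exactly this directedness, together with the companion claim that the destroyed risk $(a,x)$ is not silently regenerated by $a$'s shortened tail $P_{t+1,a}$. Both rest on the forthcoming swap-risk characterization, which expresses swap risk through parallel overlaps of shortest paths and the relative placement of the goals; without it, ``the risk rides along with $v$ from $a$ to $b$'' is only heuristic. The subtlest case should be when $x$'s path runs parallel to $P_{t,a}$ and straddles $v$: there I must check that advancing $a$ one step genuinely removes the overlap responsible for $(a,x)$ rather than sliding it one node down $a$'s tail, so that created risks $(b,x)$ and resolved risks $(a,x)$ match one-for-one as $x$ ranges. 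Once matched, the lemma combines with the resolution of $(a,b)$ and the zero net change in total distance under a bully swap to deliver the required unit decrease in $\Phi$.
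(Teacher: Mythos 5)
Your overall strategy is the same as the paper's: reduce everything to the swap-risk characterization (Proposition \ref{Swap Risk Characterization}) and track how the single transferred node changes the two paths. Your opening structural observation is correct and is actually sharper than anything the paper states explicitly: since $P_{t,b}\prec P_{t,a}$, $P_{t,b,0}=P_{t,a,1}$, and both are geodesics, a counting argument forces $P_{t,b}$ to be exactly the contiguous segment of $P_{t,a}$ from $P_{t,a,1}$ to $g_b$, so after the swap $P_{t+1,b}$ is precisely the prefix of $P_{t,a}$ ending at $g_b$ and $P_{t+1,a}$ is its tail. With that in hand, the ``prefix before $g_x$'' test in clause (2) of the characterization is literally the same set of nodes for $(b,x)$ at time $t+1$ as for $(a,x)$ at time $t$, which is the rigorous form of your transfer principle, and the newness of the $(b,x)$ risk forces the sole offending node to be $v$, which is exactly what gets deleted from $a$'s prefix --- so the created and resolved risks match one-for-one as you hoped.

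The gap is that you stop exactly where the proof starts: the two claims you flag as ``the main obstacle'' (that the new risk can only be the directed $(b,x)$, and that no new $(x,b)$ risk appears) are asserted to follow from the characterization but never derived, and they do not follow from the transfer picture alone because the characterization has an antiparallel clause that your segment-identification argument does not touch. The paper settles these with two ingredients you do not name: (i) the lemma that for parallel shortest paths at most one of $g_x\in P_b$ and $g_b\in P_x$ can hold, which splits the overlapping-parallel case into mutually exclusive subcases and is what actually ``decides which robot is the aggressor''; and (ii) in the antiparallel case, the observation that $P_{t',x}$ would have to route through $v_a$, after which $x$ and $b$ either happy-swap or have disjoint remaining paths, so no risk exists at all. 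The $(x,b)$ direction, once reduced to the parallel clause, is then a one-line monotonicity argument: the prefix of $P_x$ before $g_b$ that was contained in $P_{t,b}$ is still contained in the larger $P_{t+1,b}$. None of these steps is hard, and your plan accommodates all of them, but as written the directedness --- which is the entire content of the lemma --- remains unproved.
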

\par The proof requires a couple lemmas and a definition.
\par \begin{lemma} RIP maintains the shortest paths invariant. That is, every $P_{t,a}$ is at all times a shortest path from $P_{t,a,0}$ to $g_a$.
\end{lemma}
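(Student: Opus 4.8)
The plan is to induct on the timestep $t$, showing that each of the three operations RIP performs (advancing, subset-swapping, and cycle resolution) preserves the invariant. The base case $t=0$ is immediate, since each $P_{0,i}$ is assigned a shortest path from $s_i = P_{0,i,0}$ to $g_i$ by construction. For the inductive step I assume every $P_{t,a}$ is a shortest path from $P_{t,a,0}$ to $g_a$ and verify the claim for $P_{t+1,a}$, splitting into cases according to what happens to $r_a$ during timestep $t$. Throughout I would lean on the standard fact that any contiguous subpath (prefix, interior segment, or suffix) of a shortest path is itself a shortest path between its own endpoints; call this the segment fact.

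The easy cases are those in which $r_a$ advances, moves as part of a resolved cycle, stays put, or acts as the advancing robot $r_i$ in a subset-swap. If $r_a$ stays put then $P_{t+1,a}=P_{t,a}$ and there is nothing to check. In every other case in this group, $P_{t+1,a}$ is obtained by deleting the zeroth entry $P_{t,a,0}$, so $P_{t+1,a}$ is the suffix of $P_{t,a}$ beginning at $P_{t,a,1}$. Since $P_{t,a,1}$ is exactly $r_a$'s new location $P_{t+1,a,0}$, the segment fact shows $P_{t+1,a}$ is a shortest path from $P_{t+1,a,0}$ to $g_a$. For a resolved cycle each participating robot advances to the node it wanted, so the same suffix argument applies to all of them at once.

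The substantive case is when $r_a=r_j$ is subset-swapped by some $r_i$, so that $P_{t+1,j}$ is formed by prepending $P_{t,i,0}$ to $P_{t,j}$. Here the swap condition supplies $P_{t,i,1}=P_{t,j,0}$ together with $P_{t,j}\prec P_{t,i}$, i.e. the nodes of $P_{t,j}$ occur in order as a subsequence of $P_{t,i}$. The key sublemma I would establish is that a parallel subset of a shortest path that is itself a shortest path must in fact be a \emph{contiguous} subpath: if $q_\ell$ and $q_{\ell+1}$ are consecutive on $P_{t,j}$ they are adjacent in $G$, while the segment of $P_{t,i}$ between their two occurrences is (by the segment fact) a shortest path of length $d(q_\ell,q_{\ell+1})=1$, forcing $q_\ell$ and $q_{\ell+1}$ to be consecutive on $P_{t,i}$ as well. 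Iterating, $P_{t,j}$ is a contiguous subpath of $P_{t,i}$, and since $P_{t,j,0}=P_{t,i,1}$ it is precisely the segment of $P_{t,i}$ from index $1$ out to $g_j$. Prepending $P_{t,i,0}$ therefore produces the prefix of $P_{t,i}$ running from $P_{t,i,0}$ to $g_j$, which by the segment fact is a shortest path; and its zeroth node $P_{t,i,0}$ is exactly where $r_j$ lands after the swap, so $P_{t+1,j,0}$ is correct.

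I expect the contiguity sublemma to be the only real obstacle, since every other case collapses to the segment fact, and the prepend operation is legitimate only because $\prec$ (parallel containment) between two shortest paths is rigid enough to force contiguity. It is worth remarking that the serial-versus-simultaneous bookkeeping within a timestep causes no difficulty here: the invariant is a per-robot statement, and each robot's updated path is determined from the time-$t$ paths alone, so the cases may be checked independently.
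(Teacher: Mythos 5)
Your proof is correct and follows essentially the same route as the paper: induct on $t$, dispose of advances, waits, and cycles via the suffix-of-a-shortest-path observation, and reduce the subset-swap case to showing that $P_{t,i,0},v_0,\dots,v_n$ is a shortest path to $g_j$. Your contiguity sublemma is the one step the paper leaves implicit --- it simply asserts that the prepended path is a ``shortest (sub)path'' of $P_{t,i}$, which is exactly the contiguity claim you prove --- while the paper additionally dispatches the antiparallel possibility by showing it forces $P_{t,b}$ to have two nodes and the swap to be a happy swap, a case you legitimately exclude up front by reading the $\prec$ (parallel-subset) test from the pseudocode.
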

\par \begin{proof} Assume that all paths are shortest paths at time $t$. Clearly advancing a robot along its path keeps it on a shortest path. Let $a$ subset-swap $b$. Let $P_{t,b} = v_0,v_1,\dots,v_n$, (with those vertices appearing in that order). If $n=1$, then it is possible that the two paths are antiparallel, in which case the swap is a happy swap and the robots advance, maintaining the shortest paths invariant. This is the only case in which the paths are antiparallel: if they are antiparallel, then we need $v_n$ to appear in $P_{t,a}$ before $v_0,v_1,\dots, \text{ and } v_{n-1}$. But we know that $v_0 = P_{t,b,0} = P_{t,a,1}$. Thus $P_{t,a}$ begins $v_n,v_0$, hence $n=1$. Now consider when the two paths are parallel. If $b$ is at its goal before being swapped, then its path after being swapped is length one, which is best possible. If $b$ is not at its goal, then $P_{t+1,b} = P_{t,a,0},v_0,v_1,\dots,v_n$ with $n>1$. Since $P_{t,a,0},v_0,v_1,\dots,v_n$ is a shortest (sub)path from $P_{t,a,0}$ to $v_n$, it follows that $P_{t+1,b}$ is indeed a shortest path.
\end{proof}

\par In the following lemmas, all paths are shortest paths, as appear in RIP.
\par \begin{lemma} If $P_c \rightrightarrows P_d$ are parallel shortest paths, then there is at most one of the swap risks $(c,d)$ and $(d,c)$, i.e. parallel swap risks are inherently directed. If $P_c \leftrightarrows P_d$ are antiparallel shortest paths, then we can have any combination of swap risks $(c,d)$ and $(d,c)$.
\end{lemma}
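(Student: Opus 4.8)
The plan is to treat the two halves separately: the parallel case is a genuine \emph{for all} statement (an antisymmetry), while the antiparallel case is an \emph{existence} statement proved by exhibiting examples. In both cases I would first invoke the preceding lemma so that the intersection $P_c\cap P_d=\{a_0,\dots,a_n\}$ consists of common nodes that both paths traverse in a fixed relative order, and I would orient this common ``bottleneck'' by that order. The point of the orientation is that the definitions of bully swap and swap need can be read off it: a bully swap $(c,d)$ occurring at a common node $a_i$ necessarily advances $c$ in the oriented direction while pushing $d$ one step against it, and since in the parallel case both robots traverse the common nodes in the \emph{same} direction, being pushed against that direction is exactly what lengthens $d$'s shortest path (justified by the shortest-paths invariant).

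For the parallel claim I would show that a swap risk is inherently \emph{antisymmetric} by characterizing it as an ``overtaking'' condition. Concretely, I would argue that swap risk $(c,d)$ forces $c$ to finish ahead of $d$ in the common orientation while it starts behind — i.e. $c$ must overtake $d$ along the shared segment — and symmetrically that $(d,c)$ forces $d$ to overtake $c$. The key observation is that the relevant portion of $d$'s path then sits as a parallel sub-path strictly \emph{ahead} inside $c$'s path (the $\prec$ relation from the subset-swap discussion), and strict containment in a single direction cannot hold both ways. Assuming both swap risks, I would produce the contradiction that $c$ must end up strictly ahead of $d$ and $d$ strictly ahead of $c$ along the one oriented bottleneck, fixing their final relative order in incompatible ways. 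This leans on the definition of swap need (a forced bully swap over all completions) together with the push-back computation above to justify the overtaking characterization.

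For the antiparallel claim the task is only to realize each of the four combinations, so I would exhibit one small graph per combination rather than argue universally. The ``neither'' case is the head-on configuration on a path: two robots whose shared segment is traversed in opposite directions simply meet and execute a \emph{happy} swap, so no bully swap is ever forced. For the remaining combinations, where a genuine swap risk is present in one or both directions, I would move off the line to cyclic or branching graphs and use the second clause of the swap-risk definition, namely that a swap need may be \emph{introduced by advancing both robots along their shortest paths}; the examples are then built so that advancing drives the robots into a position from which a bully swap in the desired direction(s) is unavoidable, while the common nodes remain in reversed order so the paths stay antiparallel.

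The main obstacle is the parallel characterization. Because the formal swap-risk characterization is only established later, I would have to prove the overtaking description inline, and the delicate point is the disjunctive definition of swap risk: it is not enough to analyze the current configuration, since a swap need may only appear after some advancing, and I must rule out \emph{every} completion that avoids a bully swap in order to conclude that overtaking is truly forced. A secondary difficulty is verifying the antiparallel ``both'' example: since an antiparallel overlap is locally head-on and hence locally happy, the forced bully swaps cannot come from the overlap itself and must be engineered through the global (cyclic) structure and the advancing clause, so checking that both directions are genuinely forced — and not circumvented by an alternative shortest-path routing — is where the care lies.
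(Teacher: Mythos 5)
Your parallel-case argument has a real gap that you yourself flag but do not close. The ``strict containment both ways'' contradiction you propose (the $\prec$ relation holding in both directions) is a statement about swap \emph{needs}: $c$ swap-needs $d$ only when $P_d$ is a parallel subset of $P_c$. A swap \emph{risk} is strictly weaker --- the containment need not hold at the current time, only after some advancing --- so your contradiction does not apply to the hypothesis of the lemma as stated, and you admit you would still have to ``rule out every completion.'' The paper closes this with one observation you never isolate: since advancing only truncates paths from the front, a swap need $(c,d)$ arising at any later time already forces $g_d\in P_c$ \emph{now}; and for parallel paths one cannot have both $g_d\in P_c$ and $g_c\in P_d$, since the two goals (being the terminal nodes of their respective paths) would then have to appear in opposite orders on the two paths, contradicting parallelism. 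That single goal-containment fact is the whole proof of antisymmetry; your ``overtaking'' framing is gesturing at it but never reduces to an invariant that survives the advancing clause.

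The antiparallel half contains a misconception. You claim the nontrivial combinations ``cannot come from the overlap itself and must be engineered through the global (cyclic) structure'' because a head-on antiparallel overlap is locally happy. That is not the mechanism. The swap need arises when one robot \emph{parks at its goal}, which lies strictly on the other's path; this is perfectly realizable on a simple path graph. The paper's examples are all linear: $P_c=v_0,v_1,v_2,v_3$ and $P_d=v_4,v_3,v_2,v_1$ on the path $v_0$--$v_4$ gives both risks (advance $d$ to $g_d=v_1$ to force the swap need $(c,d)$, or advance $c$ to $g_c=v_3$ to force $(d,c)$), and shortening one path gives the one-sided cases. Note also that the lemma asserts both swap \emph{risks} can coexist at time $t$, realized by two different advancing schedules --- not that both swap needs hold simultaneously --- which dissolves the verification difficulty you anticipate. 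No cycles or branching are needed, and your stated reason for reaching for them suggests you are locating the obstruction in the head-on collision rather than in the occupied goal node.
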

\par \begin{proof} Let $P_c \rightrightarrows P_d$. 
The proof uses the fact that we cannot simultaneously have $g_d\in P_c$ and $g_c \in P_d$; otherwise we would have $g_c,g_d$ appear in that order in $P_d$, and in reverse order in $P_c$. For two robots on parallel paths to incur a swap need, it must inevitably get to the point where one of the robots, say $c$, is not yet at its goal, but its next node is occupied by $d$, who is at its goal. By the fact we just mentioned, that means we have at most one of $s(c,d)$ and $s(d,c)$. Now let $P_c \leftrightarrows P_d$. If $P_c = v_0,v_1$ and $P_d = v_1,v_0$, then we have neither $s(c,d)$ nor $s(d,c)$. If $P_c = v_0,v_1,v_2$ and $P_d = v_3,v_2$, then we can incur a swap need by advancing $P_d$ one step to $v_2$. If $P_c = v_0,v_1,v_2,v_3$ and $P_d = v_4,v_3,v_2,v_1$, then we can incur a swap need $(c,d)$ by advancing $d$ to its goal, or a swap need $(d,c)$ by advancing $c$ to its goal.
\end{proof}
\par We have the following characterization of swap-needs:
\par \begin{proposition} Let $P_c$ and $P_d$ be shortest paths. Then $c$ swap-needs $d$ if and only if $d$'s goal is on $c$'s path, $c$ and $d$'s paths are parallel, and $d$'s path is a subset of $c$'s i.e. $P_d$ is a subsequence of $P_c$. In notation, $c$ swap-needs $d$ if and only if $g_d\in P_{t,c}$, $P_{t,c} \rightrightarrows P_{t,d}$, and $P_{t,d}\subset P_{t,c}$. 
\end{proposition}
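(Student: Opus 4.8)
The plan is to invoke the preceding lemma, which guarantees that $P_c$ and $P_d$ are disjoint, parallel, or antiparallel, and to treat these three cases separately. First I would record the harmless observation that whenever $P_d \subset P_c$ every node of $P_d$—in particular $d$'s goal $g_d$—already lies on $P_c$, so the hypothesis $g_d \in P_c$ is implied by $P_d \subset P_c$; the real content of the proposition is that $c$ swap-needs $d$ exactly when the paths are parallel \emph{and} $P_d \subset P_c$. I would then prove the two directions by opposite means: sufficiency by showing that a bully swap is unavoidable, and necessity (in contrapositive form) by exhibiting, in each of the three non-matching situations, an explicit schedule of advances that brings both robots home using only happy swaps.

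For sufficiency, assume $P_{t,c} \rightrightarrows P_{t,d}$ and $P_{t,d}\subset P_{t,c}$. Writing $P_{t,c}=u_0,\dots,u_M$, I would first note that $P_{t,d}$ is forced to be a \emph{contiguous} subpath $u_i,\dots,u_j$ with $u_j=g_d$: since $P_{t,d}$ is a shortest path whose nodes appear in $P_{t,c}$ in order, its edge-length must equal the distance $j-i$ between its endpoints along $P_{t,c}$, leaving no room for gaps. Because $g_c\neq g_d$ we have $j<M$, so $c$—which visits every node of $P_{t,c}$ in order as it advances—must occupy $u_j$, whereas $d$ parks permanently at $u_j=g_d$. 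The decisive point is that $c$ begins strictly behind $d$ ($u_0$ versus $u_i$ with $i\ge 1$) on this single corridor and both run in the same direction, so the only way for $c$ ever to reach $u_j$ is to overtake $d$; but overtaking in a parallel configuration means $c$ advances onto $d$'s node while $d$ is pushed to a lower-index node, lengthening $d$'s path, i.e.\ a bully swap. Hence no schedule avoids a $(c,d)$ bully swap, which is precisely swap-need.

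For necessity I would argue the contrapositive, producing a bully-swap-free schedule in each failing case. If the paths are disjoint, the robots never contend for a node and advance independently. If they are antiparallel, they traverse the shared corridor in opposite directions and cross exactly once; I would schedule them to become adjacent on the corridor and swap simultaneously, which is a happy swap since each is advancing, and after the single crossing $g_d$ lies on the side $c$ has already left, so no further contention occurs. If the paths are parallel but $P_{t,d}\not\subset P_{t,c}$, then some node of $P_{t,d}$ lies off the corridor, giving $d$ a place to wait out of $c$'s way; using the earlier fact that parallel shortest paths cannot satisfy both $g_d\in P_c$ and $g_c\in P_d$, I would split on whether $g_d\in P_{t,c}$ and in either subcase let $c$ clear the shared region during a window in which $d$ sits off the corridor, so that $c$ passes $g_d$ before $d$ settles there. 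In every failing case the schedule is bully-swap-free, so $c$ does not swap-need $d$.

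The main obstacle I anticipate is making the scheduling arguments for necessity rigorous rather than merely plausible: in particular, verifying in the antiparallel case that the crossing can always be realized as a genuine happy swap even when a goal sits in the interior of the shared corridor (and handling the possibility that consecutive common nodes are joined by disjoint intermediate routes, in which case the two robots pass without meeting at all), and verifying in the parallel-but-not-subset case that the off-corridor detour really frees a time window long enough for $c$ to clear $g_d$, including the degenerate cases where $P_{t,d}$ is a single node or where $d$ begins on the corridor ahead of $c$. The contiguity observation and the no-double-goal-crossing fact for parallel paths are the two small facts I would pin down first.
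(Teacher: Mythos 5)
Your proposal matches the paper's proof in both structure and substance: sufficiency via the observation that $d$ reaches and parks at $g_d$ on the shared corridor before $c$ can get there, and necessity by the same three-case split (disjoint, antiparallel with a single happy-swap crossing, parallel-but-not-subset with $d$ waiting at an off-corridor node while $c$ clears $g_d$), using the parallel/antiparallel trichotomy lemma. Your added observations --- that $g_d\in P_c$ is already implied by $P_d\subset P_c$, and that a parallel shortest-path subsequence is forced to be contiguous --- are correct minor refinements rather than a different argument.
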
 
\par \begin{proof} If those hypotheses hold, then $c$ cannot get to any node on $P_d$ before $d$. In particular, $d$ gets to $g_d$ before $c$. To show the hypotheses are necessary, consider negating each one. We need $g_d$ on $P_c$ to have a swap risk $(c,d)$. Indeed, the only way $c$ can be stuck before reaching its goal is if its next node is currently occupied by $d$, and $d$ does not want to move -- because it is already at $g_d$. If the paths are not parallel, then they are either nonintersecting or else antiparallel. Clearly there is no swap need if the paths are nonintersecting. If they are antiparallel, let their intersection be $v_0,\dots,v_m$, appearing in that ordering in $P_c$, with $m>0$. We can get both robots to goal by advancing (or keeping) $c$ to $v_0$ and advancing (or keeping) $d$ to $v_1$. We can then either immediately swap the two robots, and then send them along their nonintersecting remaining paths. Or else we can move one of the robots off its $v_i$, send the other robot to goal, and then send the first robot to goal. Finally, assume the paths are parallel but there is some node $v$ on $P_d$ that is not on $P_c$. Then $d$ must go through $v$ before stymying $c$ at $g_d$. But then there is no swap need, as $c$ can advance to $g_c$ while $d$ waits at $v$.
\end{proof}
\par This then allows for a characterization of swap risks. Let $P_a = a_0,\dots,a_n$ with $g_b = a_m$ and $P_b$ be shortest paths.
\par \begin{proposition}\label{Swap Risk Characterization} Robot $a$ swap-risks $b$ if and only if either
\begin{enumerate}
\item $P_a \leftrightarrows P_b$, $g_b$ is on $P_a$, and $a$ is not currently occupying $g_b$ i.e. $a_0 \neq g_b$.
\item $P_a\rightrightarrows P_b$, $g_b$ is on $P_a$, and $P_{a,0},\dots,P_{a,m-1}$ is not a subset of $P_b$. Equivalently, $P_{a,0},\dots,P_{a,m}$ is not a subset of $P_b$.
\end{enumerate}
\end{proposition}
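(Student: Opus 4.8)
The plan is to combine the definition of swap risk with the swap-need characterization just proved. By definition, $a$ swap-risks $b$ exactly when some configuration reachable from the current one, by advancing $a$ and $b$ along their (shrinking) shortest paths without collision, is a swap need $(a,b)$. Every such reachable configuration replaces $P_a$ and $P_b$ by suffixes $P_a' = a_s,\dots,a_n$ and $P_b' = b_t,\dots,g_b$, and by the swap-need characterization it is a swap need $(a,b)$ if and only if $g_b \in P_a'$, $P_a' \rightrightarrows P_b'$, and $P_b' \subseteq P_a'$. First I would record two immediate consequences. Since $P_a'$ is a suffix of $P_a$, the condition $g_b \in P_a'$ forces $g_b \in P_a$, so $g_b \in P_a$ is necessary for any swap risk; and once $g_b \in P_a$ the two paths meet at $g_b$, so by the lemma that shortest paths are parallel, antiparallel, or nonintersecting, they must be parallel or antiparallel. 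This splits the argument into exactly the two cases of the statement.

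For the antiparallel case I would first show that in any swap-need configuration $P_b'$ must be the single node $\{g_b\}$. Indeed $P_b' \subseteq P_a'$ means every node of $P_b'$ is shared, and on antiparallel paths two or more shared nodes occur in opposite orders, so they cannot form a same-order subsequence of $P_a'$; hence $P_b'$ has one node, namely its endpoint $g_b$. Thus the only candidate swap-need configuration has $b$ at its goal with $g_b$ still ahead on $a$'s path, and it remains to decide when this is reachable. Because $g_b$ is the last node of $P_b$ and shared nodes appear reversed, $g_b$ is the first shared node along $P_a$; so $a_0 \neq g_b$ is equivalent to $a_0$ being a private node of $a$, and then $a$ never blocks $b$, so we may drive $b$ to its goal while $a$ stays put, producing the swap need. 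Conversely, if $a_0 = g_b$ then placing $b$ at $g_b$ collides with $a$, while advancing $a$ off $g_b$ deletes $g_b$ from $P_a'$; either way no swap need forms. This yields case (1).

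For the parallel case I would prove that the subset condition governs reachability. For the easy direction, if $a_0,\dots,a_{m-1}$ is not a subset of $P_b$, pick an index $i < m$ with $a_i \notin P_b$; then $a_i$ is off $b$'s track, so I can advance $a$ to $a_i$ and afterwards drive $b$ all the way to $g_b$ without collision, and since $i < m$ we still have $g_b \in P_a'$. With $P_b' = \{g_b\} \subseteq P_a'$ this is a swap need, so $a$ swap-risks $b$. For the converse I would argue the contrapositive: if $a_0,\dots,a_{m-1} \subseteq P_b$ then in particular $a_0 \in P_b$, so $b$ must still travel through $a$'s current node to reach $g_b$; because the paths are parallel the robots move in the same direction along their shared nodes and neither can overtake the other merely by advancing, so $a$ stays strictly ahead of $b$ on the shared track. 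Hence by the time $b$ can sit at $g_b$, robot $a$ has already passed $g_b$ and $g_b \notin P_a'$, so no reachable configuration is a swap need and there is no swap risk. Finally I would note the two subset formulations are interchangeable because $a_m = g_b \in P_b$ always, so adjoining $a_m$ never changes whether the set lies in $P_b$.

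The main obstacle I expect is the ``no overtaking'' step in the converse of the parallel case: I must argue rigorously that, when $a_0,\dots,a_{m-1}$ all lie on $P_b$, no valid advancement — possibly skipping over private nodes of $b$ and with the shared nodes occurring with gaps — lets $b$ reach $g_b$ while $g_b$ is still on $a$'s remaining path. The clean way is to track the two robots' positions on the common subsequence of shared nodes, observe that parallelism forces this order to be preserved and that moving onto an occupied node is forbidden, and conclude that the initial lead of $a$, guaranteed by $a_0 = b_{t_0}$ with $t_0 \geq 1$, is maintained throughout.
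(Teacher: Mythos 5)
Your proposal is correct and follows essentially the same route as the paper: reduce swap risk to the swap-need characterization, split on parallel versus antiparallel via the shared node $g_b$, and argue reachability of a swap-need configuration in each case (your write-up is in fact more careful than the paper's, which merely asserts that one can ``finagle'' $b$ to $g_b$). The one refinement needed is in the forward direction of the parallel case: take $i$ to be the \emph{smallest} index with $a_i\notin P_b$, since parallelism then forces $b_0\notin\{a_1,\dots,a_i\}$ (any shared $a_j$ with $j<i$ would have to precede $b_0$ in $P_b$, which is impossible), making the advance of $a$ to $a_i$ collision-free.
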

\par \begin{proof} If (1) holds, then we cannot have $P_{a,0}$ on $P_b$, because it must come after $g_b$ by antiparallelism. Thus we can incur a swap need $(a,b)$ by advancing $b$ directly to $g_b$. Conversely, if $P_a\leftrightarrows P_b$, then to incur a swap need $(a,b)$, we need that $g_b$ can be occupied by $b$ before $a$: otherwise, if $b$ blocked $a$'s progress, we would always be able to move $b$ or happy swap $a$ and $b$. 
\par $P_{a,0},\dots,P_{a,m-1}$ not being a subset of $P_b$ captures the condition that $b$ can get to $g_b$ before $a$. One way of thinking of this target condition is that, if we truncate $a$'s path to $P_{a,0},\dots,P_{a,m-1}$, then we do not have that $b$ swap-needs $a$. Thus, if (2) holds, then there is some way to finagle $b$ to $g_b$ before $a$, again incurring a swap need. Conversely, we need $g_b$ on $P_a$ to have a swap risk $(a,b)$, and some way for $b$ to get to $g_a$ before $a$ without swapping.
\end{proof}
\par It might seem mysterious that there can be swap risks between robots on antiparallel paths. The explanation is that those paths can become trivially parallel. There is a swap need if and only if that trivial intersection is a goal node $g_a$ occupied by its goalie $a$.
\par It follows from the characterization of swap risk that subset-swapping $a$ past $b$ at time $t$ resolves the swap risk $(a,b)$, as the two new paths are parallel and $P_{t+1,a,0},\dots,g_b = P_{t+1,b,1},\dots,g_b\subset P_{t+1,b}$. Also, it does not introduce a swap risk $(b,a)$, as $g_a$ is not on $P_{t+1,t}$. 
The paths are parallel before and after swapping. After swapping, $P_{t+1,a,0},\dots,P_{a,\text{index}(g_b)-1}$ is not a subset of $P_{t+1,b}$, so there is no longer a swap risk $(a,b)$. 
\par \begin{lemma} If $P_c\rightrightarrows P_d$ are shortest paths, then we can have at most one of $g_d\in P_c$ and $g_c\in P_d$.
\end{lemma}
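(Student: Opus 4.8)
The plan is to argue by contradiction, exploiting the fact that each goal node is the terminal node of its own shortest path, and then playing the two forced orderings against the definition of parallelism. So first I would suppose, for contradiction, that both $g_d \in P_c$ and $g_c \in P_d$ hold simultaneously. The immediate observation is that $g_c$ is the last node of $P_c$ and $g_d$ is the last node of $P_d$ (these are the goals the paths terminate at), so the hypotheses $g_d \in P_c$ and $g_c \in P_d$ place \emph{both} goals into the common intersection $P_c \cap P_d$.

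Next I would extract the two orderings forced on the pair $\{g_c, g_d\}$. Since $g_c$ is the terminal node of $P_c$, every other node of $P_c$ precedes it; in particular $g_d$ precedes $g_c$ in $P_c$. Symmetrically, since $g_d$ is the terminal node of $P_d$, the node $g_c$ precedes $g_d$ in $P_d$. Thus $g_d, g_c$ appear in that order in $P_c$ but in the reverse order in $P_d$. Because $g_c$ and $g_d$ both lie in $P_c \cap P_d$, the definition of $P_c \rightrightarrows P_d$ demands that the common nodes occur in the same order in both paths, which is exactly what we have just contradicted.

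The one loose end is the degenerate case $g_c = g_d$, in which the "reverse order" argument is vacuous. I would dispose of this by appealing to the problem setup: the goal nodes $\{g_1, \dots, g_k\}$ form a set of distinct vertices (two ghosts cannot terminate on the same node), so for $c \neq d$ we have $g_c \neq g_d$ and the two goals are genuinely distinct common nodes. I do not expect a serious obstacle here; the whole argument is a short order-comparison that was already foreshadowed in the earlier parallel/antiparallel lemma, and the only subtlety worth stating explicitly is the distinctness of goals that rules out the coincident case.
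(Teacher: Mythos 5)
Your proof is correct and follows essentially the same route as the paper's: both goals being terminal nodes forces the orders $g_d,g_c$ in $P_c$ and $g_c,g_d$ in $P_d$, contradicting parallelism. Your explicit handling of the degenerate case $g_c=g_d$ (ruled out since goal nodes are distinct) is a small but welcome addition the paper leaves implicit.
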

\par \begin{proof} Otherwise we would have $g_d,g_c$ appear in that order in $P_c$ and $g_c,g_d$ appear in that order in $P_d$, and the paths would be antiparallel, contradiction.
\end{proof}
\par We can now return to the proof of the main monovariant lemma.
\par \begin{proof}[Proof of Main Monovariant Lemma] Denote the vertex that $b$ is pulled from and that $a$ swaps to as $v_b$. Similarly, denote the vertex that $a$ leaves and $b$ is pulled to as $v_a$. The only way pulling $b$ back can introduce a swap need between $x$ and $b$ is if $x$ can advance along its path to $v_b$, and in doing so maintain the new swap risk. 
Denote $x$'s shortest path from $v_b$ as $P_{t',x}$, and denote $a$ and $b$'s shortest paths after swapping as $P_{t',a} = P_{t,a}\setminus P_{t,a,0}$ and $P_{t',b}=P_{t,b},P_{t,a,0}$. We will work through three cases. Cases (1) and (3b) cannot give rise to a new swap risk between $x$ and $b$. Cases (2) and (3a) allow for a new swap risk $(b,x)$, but only by resolving the previous swap risk $(a,x)$.
\begin{enumerate}
\item Assume $P_{t',x} \rightleftarrows P_{t',b}$. Since $P_{t',x,0} = P_{t',b,1}$, then we must have that $P_{t',x}$ routes through $v_a$, 
since otherwise they would only intersect in $v_b$ and be parallel. There is no swap risk then: if all other robots are removed, then $b$ and $x$ are either immediately forced to happy swap, or else $x$ takes a detour during which $b$ moves to $v_b$, and then their remaining paths have empty intersection.
\item Assume $P_{t',x} \rightrightarrows P_{t',b}$ and that they only have $v_b$ in common. For there to be a swap risk, we need $v_b = g_x$; otherwise we would resolve the swap risk as soon as we move $x$. So let $v_b = g_x$. That means that there is a swap risk $(a,x)$ at time $t$, as $x$ could advance to $v_b = g_x$ and make $a$ swap-need it. Conversely, advancing $a$ to $v_b$ resolves this swap risk. 
\item Assume $P_{t',x} \rightrightarrows P_{t',b}$ and that they have at least two vertices in common, one of which we know must be $v_b = P_{t',x,0} = P_{t',b,1}$. There are two subcases: $g_x \in P_{t',b}$, and $g_b \in P_{t',x}$, which correspond to swap risks $(b,x)$ and $(x,b)$ respectively. 
We use lemma 4.10 to further break into the following two subcases
\begin{enumerate}
\item $g_x\in P_{t',b}$. Then $v_a\notin P_{t',x}$, since, if so, we would have $v_b,v_a$ in that order in $P_{t',x}$, and $v_a,v_b$ in that order in $P_{t',b}$. Thus, while $b$ sits at $v_a$, $x$ can advance from $v_b$ to $g_x$ and incur the swap-need $(b,x)$. This means that there was a swap risk $(a,x)$ at time $t$, as $x$ could similarly have advanced to $v_b$ and then $g_x$ while $a$ sat at $v_a$. Furthermore, we assumed that there was no swap risk $(b,x)$ at time $t$. That is, regardless of how we were to advance $b$ and $x$, $b$ would make it past $g_x$ and eventually to $g_b$. This same reasoning shows that there is no swap risk $(a,x)$ at time $t'$: by parallelism, $g_x$ is the last node from $P_{t',x}$ to appear on $P_{t',a}$; if we restrict to advancing $x$ and $a$ at time $t'$, then we are forced to advance $a$ past $g_x$, and eventually to $g_a$. 
\item $g_b\in P_{t',x}$ and there is a swap risk $(x,b)$ at time $t'$ but not $t$. Let $P_{t,x} = x_0,\dots,x_n$, and let $x_m = g_b$. By the characterization of swap risk, for there to be no swap risk at time $t$ we need $x_0,\dots,x_{m-1}$ to not be a subset of $P_{t,b}$. But then $x_0,\dots,x_{m-1}$ is not a subset of $P_{t',b}$ either: that would only be possible if $x_0 = v_a$, which is not the case. Thus there cannot be a new swap risk $(x,b)$ at time $t'$.
\end{enumerate}
\end{enumerate}
\end{proof}
We can finally prove the theorem.
\begin{theorem} RIP terminates in $O(k^2+\text{SIC})$ timesteps with every robot at its goal.
\end{theorem}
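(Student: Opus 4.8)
The plan is to run the potential-function argument that has already been set up: take $\Phi = \sum_{i=1}^k d(r_i,g_i) + \sum_{i=1}^k\sum_{j=1}^k s(i,j)$ and show that $\Phi$ is a nonnegative integer which starts at most $k^2+\text{SIC}$ and strictly decreases on every timestep the algorithm runs. The makespan bound then falls out immediately, since the number of timesteps is at most the initial value of $\Phi$. So the theorem is essentially an assembly of the facts established above, and almost all of the real content has been pushed into the Main Monovariant Lemma.

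First I would record the two boundary facts. Because every distance $d(r_i,g_i)$ and every indicator $s(i,j)$ is a nonnegative integer, $\Phi\geq 0$ at all times, and $\Phi=0$ forces every $d(r_i,g_i)=0$, i.e.\ every robot sits at its goal. At time $0$ the distance sum is exactly $\text{SIC}$, while the swap-risk sum ranges over the ordered pairs $(i,j)$ with $i\neq j$ and so is at most $k(k-1)<k^2$; hence $\Phi_0\leq k^2+\text{SIC}$.

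Next I would establish the strict per-timestep decrease. By the chain-following argument given above, whenever some robot is not at its goal at least one of the three actions (advancing an unimpeded robot, resolving a cycle, or performing a subset swap) is executed, so by the reduction already stated it suffices to show each single action lowers $\Phi$ by at least $1$. Advancing an unimpeded robot and resolving a cycle each lower the distance sum by the number of robots moved and, by the definition of swap risk, introduce no new risk, so each lowers $\Phi$ by at least $1$. The only delicate case is the bully subset swap of $a$ past $b$: here $a$ advances while $b$ is pulled back one node, so the distance sum is \emph{unchanged} and everything rides on the swap-risk term. I would combine three facts proved earlier: (i) the swap resolves $(a,b)$ and does not create $(b,a)$; (ii) it creates no new risk involving $a$, nor any risk among pairs disjoint from $\{a,b\}$; and (iii) by the Main Monovariant Lemma, every newly created risk $(b,x)$ is matched by the simultaneous resolution of $(a,x)$. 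Tallying, the swap-risk term changes by $-1+\sum_x\bigl(\mathbf{1}[\text{new }(b,x)]-\mathbf{1}[\text{resolved }(a,x)]\bigr)\leq -1$, so the swap lowers $\Phi$ by at least $1$.

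Finally I would assemble: $\Phi$ is a nonnegative integer, at most $k^2+\text{SIC}$ initially, and drops by at least $1$ on each pass of the outer while loop; hence the loop executes at most $\Phi_0\leq k^2+\text{SIC}$ times, and when it halts every distance is $0$, i.e.\ all robots are at their goals. This gives the makespan $O(k^2+\text{SIC})$. The main obstacle was the bully-swap bookkeeping, but that has been fully isolated into the Main Monovariant Lemma, so the remaining work here is routine. The one point I would be careful to flag is that each indicator $s(i,j)$ depends only on the two paths $P_{t,i}$ and $P_{t,j}$; it is precisely this pairwise locality that lets the contributions of distinct simultaneous actions be tallied independently and justifies the ``$m$ actions decrease $\Phi$ by at least $m$'' reduction invoked above.
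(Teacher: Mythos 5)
Your proposal is correct and follows essentially the same route as the paper: the paper's own proof of this theorem is a one-line appeal to the potential function $\Phi \leq k^2+\text{SIC}$ and the Main Monovariant Lemma, with all the supporting facts (at least one robot moves per timestep, advances and cycle resolutions decrease $\Phi$, bully subset swaps decrease $\Phi$ by at least $1$) established beforehand exactly as you assemble them. Your write-up just makes the bookkeeping explicit rather than leaving it implicit in the preceding lemmas.
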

\begin{proof} By the monovariant lemma, the potential function $\Phi \leq k^2+\text{SIC}$ decreases by at least $1$ every timestep until the robots meet their goals.
\end{proof}

\section{Bubbletree}
\subsection{Comparison with Parallel Sorting}
In addition to the relation to MAPF, there is a similarity between PERR and parallel bubble sorting, especially for dense maps. For the case of linear maps, we can think of the robots as being numbered by the order of their goals, and think of swapping them as swapping two array entries in bubblesort. Of course, a substantial advantage of PERR over sorting with neighbor-swapping operations is that we already know the destination of each entry. 
Consider the case of a binary tree. If we choose the root node in the ``middle'' of the tree such that the ``left'' subtrees and ``right'' subtrees form an equipartition of the root node's children, then, in $n$ steps can we move all robots with left destinations to the left subtrees and vice versa. We can then recursively call this algorithm on the two subtrees in parallel, giving an $O(n)$ makespan, which is best possible. The bubbletree algorithm generalizes to higher degree trees. It works by forming target subtrees for each subtree $T_c$ defined by a child $c$ of the mid node. In the pre-recursive phase, we can again send all robots to their target subtree in $O(n)$ steps. 
\par A disadvantage of PERR over sorting on trees with neighbor-swapping operations is that the graph topology for PERR is specified in advance. We have some freedom in choosing the spanning tree (or forest), but there is much more freedom in building a balanced heap, say, in which the entries need only be partially ordered and in which new nodes can be spontaneously produced.
\par Bubble sorting is a rightly maligned sorting algorithm: its quadratic runtime is much worse than the $n\text{log}n$ bound on comparison sorting achieved by quicksort and mergesort, and, in practice, bubble sort is significantly slower than other quadratic sorting algorithms like insertion sort and selection sort. 
Parallel bubble sorting achieves the maximum possible speed up, proportional to the number of processors if there are at least as many entries as processors (Silva 2016). Even an $O(n)$ runtime is disappointing for parallel sorting, however. There is a known algorithm, which uses the idea of 
expander graphs, that attains the optimal $O(\text{log}n)$ runtime, although the big-$O$ hides prohibitive constants (Ajtai et al. 1984). And 
there are practical parallel sorting algorithms like 
odd-even sort 
that run in $\text{log}^{2}n$ 
time (Haberman 1972). Nevertheless, since robots have to physically move one step at a time, we cannot expect $n$ robots to reach their goals in under $n$ steps. 
Thus parallel bubblesorting is a natural place to look for inspiration and the hope of  optimal big-$O$ results. 
As mentioned above, the analogy between PERR and parallel bubble sorting is still imperfect however, for four reasons:
\begin{enumerate}
	\item We already know the destination of each robot. For example, if we were allowed to perform insertions, we could sort the robots in $n$ steps. This suggests that we might expect to get even better performance in PERR than that obtained by bubblesort-esque comparison sorts. 
    \item PERR maps can have very few robots compared to the total number of nodes.
	\item Parallel bubble sorting is only designed for linear arrays, whereas PERR allows for arbitrary graph topologies.
	\item We care much more about the makespan than the runtime. That is, we do not mind as much about excess comparisons as long as the robots do not have to perform them in the real world.
\end{enumerate}
	The second and third differences will require generalizing parallel bubblesort. The first difference allows us to use specialized heuristics that do not transfer to bubblesorting on general graph topologies. Empirically, RIP is outperformed by bubblesort and shearsort, a cousin of bubblesort specialized for two-dimensional arrays. This is not very surprising given how general-purpose RIP is. We have not yet implemented bubbletree. 

\section{Bubbletree overview}
The idea of bubbletree is to restrict the graph to a tree $T\subset G$ that contains all the start and goal vertices. We can also form a forest, as long as each pair $\{s_i,g_i\}$ of start and goal vertices is contained in the same tree. Throughout, we will give the description of the algorithm for a single tree. Suggestions for how to form the forest are given in the optimizations section.
\par We want to have some guarantee that the robots are making progress. To this end, designate a mid node $v_m$ such that each of the subtrees defined by its children has at most $k/2$ goal vertices: the notion of progress is relative to $v_m$. (One way to find such a mid node is to perform a breadth-first search from an arbitrary vertex $v$, and mark every vertex $v'$ with the number of vertices in the subtree it defines. For example, every leaf $l$'s subtree $T_l$ is size $1$. If $|T_c|\leq|T|/2$ for every child $c$ of $v$, then we are done. Otherwise let $d$ be the unique node for which $|T_d|>|T|/2$. Then $d$ is a satisfactory mid node.) 
\par Every robot has an initial subtree: the $T_c$ for the closest child $c$ of $v_m$ (the one irrelevant exception is that a robot $r_i$ with $s_i = v_m$ has no initial subtree.) Every robot $r_i$ also has a target subtree: the $T_c$ for the closest child $c$ of $v$ closest to $g_i$ (there is a special subroutine for handling the robot $r_{m}$ with $g_m = v_m$. If such a robot exists, then it has no target subtree.) We form groups $G^I_c$: those robots with a common initial subtree $T_c$, and groups $G^T_c$: those robots with a common target subtree $T_c$. All robots that are not currently at their target tree are called migrants. We let $M_c\subset G^I_c$ denote the set of robots with initial tree $T_c$ that have not yet made it to their goal tree. The algorithm is designed so that each migrant set $M_c$ weakly shrinks with each timestep. Once every migrant robot has made it to its target tree (and $r_m$ has made it to $v_m$), then we can recursively call bubbletree in parallel on all the $T_c$. 
\par The algorithm works by fixing a priority ordering on each $M_c$. In a given timestep, the main operation is to select a focal $M_c$ and loop through the robots in $M_c$ in priority order, greedily advancing each robot that has not moved yet, even if this requires a bully swap. To select the focal $M_c$, consider the robot $r_j$ that is currently occupying $v_m$ (if no robot is currently occupying $v_m$, the choice of focal $M_c$ is arbitrary.) Let $T_c$ be the target tree of $r_j$. Then $M_c$ is the focal group, and we advance the robots in $M_c$ according to priority order. The idea is that one of those migrant robots will eventually make it to $v_m$, swapping $r_j$ into $T_c$ as desired. Thereafter, $r_j$ will never be displaced from its target subtree $T_c$.
\par If every robot in $G^I_c$ has already gotten to its target tree i.e. $M_c$ is empty, then we instead try to directly funnel the robots in $G^T_c$ (especially $r_j$) into $T_c$. Try to form a chain out of them starting at $v_m$ and terminating at an unoccupied node of $T_c$. If this is possible, then we form such a chain and move them all simultaneously deeper into $T_c$. If we cannot form such a chain, it must be because $r_m$ is in the way. If there are still migrants in $G^T_c$, then we simultaneously move $r_m$ deeper into $T_c$ along with the chain. Else if there are no more migrants in $G^T_c$, then we swap $r_m$ up the chain to $v_m$. Observe that $r_m$ only has to ascend to $v_m$ once per $T_c$. Thus we spend at most $d(s_m,v_m)+2k$ timesteps moving $r_m$: at most $d(s_m,v_m)$ timesteps getting to $v_m$ the first time, at most $\sum_{c}|G^T_c|<k$ steps plumbing deeper into the $T_c$'s, and at most $\sum_c|G^T_c|<k$ steps swapping up along chains of infiltrators. We do not move $r_m$ when recursively calling bubbletree on the subtrees.
\par Assume that we have the following invariant: after $t$ timesteps with $M_c$ as the selected group, the $i^{th}$ highest priority robot $r_i$ in $G^I_c$ is guaranteed to have either already made it to its target subtree, or else is at most depth$(T_c)+2i-t$ steps away from $v_m$ (here depth$(T_c)$ denotes the farthest distance a migrant in $M_c$ starts from $v_m$. Also, the highest priority robot has priority $i=0$.) If it is not the turn to move $M_c$, then this invariant is maintained. Indeed, robots are never pulled back across $v_m$. And if a migrant from $G^I_c$ that has not gotten to $v_m$ yet is swapped, then that swap will only serve to bring it closer to $v_m$. If it is the turn to move $M_c$, then consider what it means if some migrant $r_i\in M_c$ that has not yet gotten to $v_m$ fails to advance. There are two possibilities. The first is if a higher priority robot $r_j$ swaps another robot to the node that $r_i$ was planning on advancing to (this is called snubbing.) Then $r_i$ is only one step worse off than $r_j$ was the turn before at $t-1$. The guaranteed proximity to $v_m$ for $r_j$ was depth$(T_c)+2(i-1)-(t-1)=\text{depth}(T_c)+2i-t-1$. So the guarantee for $r_i$ is  $\text{depth}(T_c)+2i-t$ as desired. The second is that a higher priority robot $r_j$ bully-swapped it. This bodes even better for $r_i$ than if it is snubbed, and the guarantee closeness to $v_m$ is again maintained. 
The base case holds for $t=0$, as every robot in $M_c$ is at most $\text{depth}(T_c)$ away from $v_m$, and the non-migrants in $G^I_c$ start out in their target tree $T_c$. 
So we are done with the pre-recursive phase after at most $\sum_{c \text{ child of }v_m}\text{depth}(T_c)+2|M_c|$ timesteps advancing the various $M_c$'s, plus the timesteps during which $r_m$ moves.
\subsection{Basic Bubbletree Makespan Analysis}
\par 
Imagine we have already formed a bubble tree. Denote the max degree by $d_{deg}$ and the greatest depth by $d_{depth}$
Assume that $k = \theta(n)$, where $n$ is the number of vertices. The pre-recursive phase then takes at most 
$d_{deg}d_{depth}+4k = O(n)$ steps ($2k$ comes from the $2|M_c|$ terms, and another $2k$ comes from moving $r_m$.) Let $f(d,n)$ denote the maximum makespan produced by the basic bubbletree algorithm for a graph with max-degree $d_{deg}$ and and $n$ nodes. We then have the recursion $f(d_{deg},n) \leq (d_{deg}d_{depth}+4k)+f(d,n/2) \leq (d_{deg}n+4n)+f(d_{deg},n/2)$. It can easily be verified that $g(d,n/2) = 2dn+8n$ satisfies the recurrence $g(d,n) = (dn+4n)+g(d,n/2)$. Thus $f(d_{deg},n) \leq 2d_{deg}n+8n$.
\par Of course, this result is not very impressive if there are not very many robots, or if the max-degree is unbounded.  
In the optimizations section we present two modifications that give a provable makespan of $O(\text{diam}*\text{log}k+k)$, where diam is the diameter of the bubbletree. 
They are 
\begin{enumerate}
\item Choosing $v_m$ so that its subtrees are goal-balanced, rather than node-balanced.
\item Moving as many $M_c$ as possible in a given timestep, rather than just the focal $M_c$.
\end{enumerate}
\par We would attain an appealing linear result without the $\text{log}k$ factors. Unfortunately, no algorithm that prunes down to an acyclic graph can achieve an $O(l+k)$ makespan in all cases, where $l$ is the max distance from start to goal in the entire graph. Consider $n/\text{log}n$ robots on a circular graph with $(s_i,g_i) = (i\text{log}n,(i+1)\text{log}n+1)\pmod{n}$. The union of the shortest paths is the entire graph. Thus we must destroy some robot's shortest path if we subtract an edge. The optimal makespan is $\text{log}n+1$, obtained by advancing all the robots simultaneously in the circle. However, the optimal makespan with an edge deleted is at least $n-(\text{log}n+1) = \Omega(n) = \omega(k+1)$.
\par There may still be hope of a linear result in terms of diam or even $l$ for a tree graph ($l$ denotes the longest distance between some $s_i$ and $g_i$.) For example, if we could guarantee that the subtrees we recursively call bubbletree on had either diameters or longest paths that shrank by a constant factor, then we could prove a linear result (assuming this lengthens the pre-recursive phase by only a constant factor). 
\par However, we cannot guarantee shrinking diameter if one of the subtrees is linear. We cannot even guarantee constantly shrinking diameter after many steps, as the tree could have a linear backbone with highly-branched offshoots, say with $2^{i-1}$ nodes at distance $1$ from the $(2^i)^{th}$ node of the backbone (or at distance $\text{log}(2^{i-1})$, if the tree has bounded degree). We cannot guarantee constantly shrinking maximum path length, either. 
For example, we could again have a tree with a long backbone and exponentially shrinking offshoots. Say there is an offshoot of $2^{\text{lg}k-i}$ goal nodes emanating from the $i^{th}$ node of the backbone, and say the backbone itself contains all the start nodes and is very long. If the tail of the backbone is comprised of start nodes from the $\text{lg}k$ different offshoots, then there will be a recursive call at all depths $1,\dots,\text{lg}k$ for which the maximum length to goal will decrease by at most $\text{lg}k$. If the backbone is sufficiently long, it will also be the case the $\text{diam}*\text{lg}k$ dominates $k$. The optimization of advancing infiltrators (which does not wait for recursive calls in order to advance the robots starting at the tail of the backbone) effectively reduces $\text{diam}$ to $\text{lg}k$ in time for the first recursive call in this example. But we do not know if it is strong enough to give a linear makespan in general.

\section{Basic Bubbletree pseudocode}

\begin{algorithm}[h]
\small
\caption{Basic Bubbletree algorithm High-Level}
\label{BubbleTree algorithm High-Level} 
    $T \gets formBubbleTree(G)$\;
    $v_m \gets findMidVertex(T)$\;
         	\For{each child $c$ of $v_m$}
  	{form the migrant set $M_c$ of robots in $T_c$ that are not in their target tree\;
    }
     	\For{each child $c$ of $v_m$}
  	{ assign a (random) ordering to the robots in $M_c$\;
    }
    \While{not all $M_c$ are empty}
    {
    \If{$v_m$ is occupied, by say $r\in G^T_c$} 
     {
     \If{$focalM$ is nonempty}
     {$focalM \gets M_c$\;
     priorityMove($M_c$)\;}
         \Else{
     {moveChain($G^T_c$)}
     }
     }
         \Else{
     {$focalM \gets randomNonemptyM()$\;
                priorityMove($M_c$)\;}
     }

     }

 	moveRmtoVm()\;
    \For{each depth-1 subtree $T_c$ in parallel} {
    bubbleTree($T_c$)\;
    }

\end{algorithm}

\section{Experimental Results}
We tested RIP on: the Dragon Age Origins map brc2d (Sturtevant 2012) with up to $50$ robots; on 20x15 grid maps with randomly generated obstacles, and under a quarter as many robots as free nodes; and on linear and square arrays with as many robots as nodes. For both the DAO and 20x15 grid maps, we used the exact same (randomly generated) start and goal positions as in (Ma 2016). We compare our results against the optimal flow-based and cbs-based solvers used in (Ma 2016). 
We compare the linear and square array results against parallel bubblesort and shearsort, which are specialized for sorting linear and 2D arrays, respectively.
\subsection{DAO Map}
Adaptive CBS is an optimal solver, but somewhat slower than the flow-based solver introduced in (Ma 2016). Table \ref{DAO table} thus demonstrates that RIP performs near optimally on the DAO tests. Note that the ``super-optimal'' RIP makespans are due to the fact that many of the instances that Adaptive CBS did not solve in the allotted time have smaller than average makespans. In our tests, the DAO map is robot-sparse: it has $254,930$ cells, and at most $50$ robots. These tests corroborate the intuition that RIP performs well on maps with relatively few conflicts.
\par The runtime scaling for RIP appears to be linear with the number of robots, which also makes sense given the relative paucity of collisions: it is the case that almost every step in the inner while loop should advance a robot. For contrast, the runtime scaling for Adaptive CBS is more dramatic -- $63\%$ of all tests with $50$ robots did not even finish in the allotted five minutes. For Adaptive CBS, the median runtime is over three times that of the average completed runtime, suggesting the distribution of runtimes for a map with randomly seeded starts and goals is heavily-tailed as well.

\begin{table}[h]
\centering
\caption{RIP vs. Adaptive CBS on DAO map brc202d}
\label{my-label}
\begin{tabular}{|l|l|l|l|l|l|}
\hline
Adaptive CBS &                 &          &        & RIP      &         \\
\hline
Robots       & Fraction Solved & Makespan & Time   & Makespan & Runtime \\
\hline
5            & 1.00            & 732.1    & 0.34s  & 732.13   & 1.23s   \\
10           & 1.00            & 809.03   & 17.75s & 809.17   & 1.40s   \\
15           & 0.97            & 882.28   & 3.51s  & 880.90   & 2.1s    \\
20           & 0.87            & 905.15   & 5.43s  & 908.03   & 2.63s   \\
25           & 0.97            & 931.34   & 22.73s & 937.40   & 2.8s    \\
30           & 0.87            & 942.19   & 29.03s & 938.10   & 3.77s   \\
35           & 0.77            & 963.13   & 50.80s & 969.33   & 4.67s   \\
40           & 0.53            & 974.25   & 30.50s & 965.80   & 5.67s   \\
45           & 0.70            & 974.1    & 77.49s & 962.87   & 5.77s   \\
50           & 0.37            & 943.36   & 86.76s & 962.43   & 6.43s  
\\ \hline
\end{tabular}
\label{DAO table}
\end{table}

\subsection{20x15 Grids}
Here density denotes the fraction of nodes that were replaced with impassable obstacles. The flow-based solvers are presented in (Ma 2016). The suboptimal flow solver works by first restricting the graph to the union of every robot's individual shortest path (one path per robot, as in RIP), and then performing optimal-flow on the pruned graph. 
\par The makespans from RIP are competitive with the optimal makespans on the easier problems, and the RIP makespan is never more than $30\%$ worse than the optimal makespan. Importantly, the runtime for RIP appears linear with the number of robots and invariant of the number of obstacles, whereas the flow-based solvers demonstrate their non-polynomial nature on the hardest instances.
\par With the tremendous focus on swaps in the proof for RIP, we might hope that RIP performs very few swaps. However, RIP makes more swaps than the optimal flow algorithm in all instances, even though the flow solution does not try to minimize the number of swaps. RIP performs about as many swaps as suboptimal flow: it performs slightly fewer swaps in the instances with fewer robots, and slightly more in the instances with more robots.

\begin{table}[!htbp]
\centering
\caption{20x15 Grid Map Data}
\label{20x15 Grid Map Data}
\scalebox{0.75}{
\begin{tabular}{|l|l|l|l|l|l|l|l|l|l|l|}\hline
					  &         & RIP      &         &           & Optimal Flow &         &           & Suboptimal Flow &         &           \\
\hline
Robots                & Density & Makespan & Runtime & Swaps	 & Makespan     & Runtime & Swaps 	& Makespan        & Runtime & Swaps \\
\hline
10                    & 0\%     & 20.73    & 0.3s    & 1.83      & 20.67        & 1.78s   & 0.70      & 20.67           & 0.55s   & 2.70      \\
10                    & 5\%     & 21.00    & 0.03s   & 1.97      & 20.77        & 1.61s   & 1.07      & 20.77           & 0.53s   & 2.37      \\
10                    & 10\%    & 21.63    & 0s      & 2.47      & 21.13        & 1.52s   & 0.83      & 21.13           & 0.54s   & 3.13      \\
10                    & 15\%    & 22.60    & 0s      & 2.27      & 22.37        & 1.65s   & 1.57      & 22.37           & 0.64s   & 3.30      \\
10                    & 20\%    & 23.33    & 0.03s   & 2.93      & 23.23        & 1.57s   & 2.23      & 23.23           & 0.63s   & 3.77      \\
10                    & 25\%    & 24.80    & 0.03s   & 3.63      & 24.70        & 1.52s   & 3.50      & 24.70           & 0.70s   & 4.97      \\
10                    & 30\%    & 31.43    & 0.03s   & 6.77      & 30.67        & 2.01s   & 5.37      & 30.73           & 1.02s   & 7.03      \\
\hline
20                    & 0\%     & 24.00    & 0.5s    & 8.17      & 23.67        & 5.20s   & 6.83      & 23.67           & 2.05s   & 9.87      \\
20                    & 5\%     & 24.37    & 0.17s   & 7.93      & 23.77        & 4.89s   & 5.77      & 23.77           & 2.09s   & 10.90     \\
20                    & 10\%    & 24.93    & 0.07s   & 9.30      & 24.07        & 4.61s   & 6.50      & 24.07           & 2.19s   & 11.67     \\
20                    & 15\%    & 24.23    & 0.03s   & 10.40     & 23.67        & 3.92s   & 7.80      & 23.67           & 1.87s   & 12.77     \\
20                    & 20\%    & 26.23    & 0.07s   & 13.27     & 25.20        & 3.88s   & 9.37      & 25.20           & 2.00s   & 15.20     \\
20                    & 25\%    & 29.77    & 0.23s   & 17.80     & 28.77        & 5.05s   & 13.27     & 28.77           & 2.95s   & 19.97     \\
20                    & 30\%    & 34.50    & 0.1s    & 24.13     & 33.03        & 7.20s   & 20.80     & 33.03           & 3.97s   & 26.30     \\
\hline
30                    & 0\%     & 25.00    & 0.2s    & 16.30     & 24.10        & 9.52s   & 13.70     & 24.10           & 5.41s   & 18.40     \\
30                    & 5\%     & 26.33    & 0.07s   & 18.50     & 25.63        & 10.81s  & 16.80     & 25.63           & 6.48s   & 22.10     \\
30                    & 10\%    & 25.33    & 0.1s    & 20.17     & 24.20        & 8.59s   & 15.30     & 24.20           & 4.73s   & 21.37     \\
30                    & 15\%    & 26.23    & 0.13s   & 23.97     & 24.97        & 8.07s   & 19.00     & 24.97           & 4.71s   & 24.47     \\
30                    & 20\%    & 28.27    & 0.23s   & 33.07     & 26.70        & 9.17s   & 23.60     & 26.77           & 6.91s   & 31.27     \\
30                    & 25\%    & 33.23    & 0.07s   & 43.60     & 30.27        & 22.16s  & 30.37     & 30.27           & 22.90s  & 38.73     \\
30                    & 30\%    & 37.77    & 0.17s   & 54.77     & 34.87        & 36.47s  & 41.53     & 34.87           & 35.46s  & 53.63     \\
\hline
40                    & 0\%     & 26.43    & 0.13s   & 30.20     & 25.13        & 23.78s  & 23.83     & 25.13           & 13.31s  & 31.93     \\
40                    & 5\%     & 27.33    & 0.1s    & 34.50     & 25.67        & 22.55s  & 27.77     & 25.67           & 15.09s  & 33.60     \\
40                    & 10\%    & 27.50    & 0.07s   & 39.00     & 25.40        & 20.76s  & 31.67     & 25.43           & 14.25s  & 38.53     \\
40                    & 15\%    & 28.37    & 0.2s    & 42.20     & 25.67        & 18.75s  & 34.83     & 25.67           & 16.35s  & 41.77     \\
40                    & 20\%    & 29.73    & 0.33s   & 55.03     & 25.33        & 22.27s  & 39.53     & 25.33           & 19.67s  & 44.70     \\
40                    & 25\%    & 32.20    & 0.13s   & 66.73     & 28.77        & 53.38s  & 46.20     & 28.77           & 61.20s  & 55.40     \\
40                    & 30\%    & 39.73    & 0.2s    & 92.10     & 33.20        & 130.33s & 68.83     & 33.20           & 92.78s  & 79.53     \\
\hline
50                    & 0\%     & 28.23    & 0.3s    & 49.57     & 26.20        & 78.38s  & 39.93     & 26.20           & 55.31s  & 46.50     \\
50                    & 5\%     & 29.20    & 0.3s    & 53.03     & 26.07        & 61.78s  & 40.93     & 26.07           & 49.57s  & 51.73     \\
50                    & 10\%    & 28.40    & 0.2s    & 59.00     & 25.40        & 46.91s  & 45.50     & 25.40           & 37.02s  & 54.00     \\
50                    & 15\%    & 30.47    & 0.23s   & 71.60     & 26.73        & 68.54s  & 55.43     & 26.73           & 56.52s  & 62.20     \\
50                    & 20\%    & 31.73    & 0.33s   & 79.67     & 27.50        & 89.05s  & 61.17     & 27.50           & 94.40s  & 67.53     \\
50                    & 25\%    & 37.70    & 0.4s    & 109.50    & 30.33        & 259.08s & 75.50     & 30.37           & 268.41s & 86.83     \\
50                    & 30\%    & 44.93    & 0.33s   & 154.87    & 34.97        & 596.25s & 105.53    & 34.97           & 522.42s & 123.83   
\\ \hline
\end{tabular}
}
\end{table}

\subsection{Linear Arrays}
We test on randomly generated linear arrays of length $n=k$ for $n=1,\dots,1000$, and run ten tests for each length. The makespans produced by RIP are remarkably linear, and on average slightly less than the length of the array $n$. Curiously, RIP often produces makespans greater than $n$. For contrast, the sorting network implementation of parallel bubblesort given at http://tinyurl.com/damu92 is designed to run in exactly $2n-2$ timesteps. In fact, the algorithm can be terminated after only $n$ timesteps. To see why, recall that serial bubble sort works by launching bubbles from the first array entry so that they catch the largest entry and bring it to the end of the array, stopping before the numbers that have been brought to the end by earlier bubbles. Consider instead launching bubbles as soon as possible, rather than waiting for $n$ (or $n-i$) timesteps as in serial bubble sort. An indispensable graphic is given on slide $9$ of http://tinyurl.com/jegusez. Assume the length of the array is even. We can imagine launching two new bubbles from either end of the array every two timesteps. For $i=1,\dots,n/2$, the bubbles launched on timestep $2i-1$ eventually catch robots $i$ and $n+1-i$.
 After catching their robots, those bubbles do not release them until they make it to positions $n+1-i$ and $i$. Those robots have then made it to their goals, and will not be disturbed. The $i^{th}$ pair of bubbles finishes at timestep $(2i-1)+(((n+1-i)-i)-1) = n-1$. Accounting for the possibility that $n$ may be odd gives the bound of $n$ timesteps as desired. Since we care about the robots getting all the way to their goals, rather than to the mid vertex, we think the bubbles give a nicer illustration than the case-based guarantees used in the proof of bubbletree. 
\par It is also clear that at least $n$ steps are needed in the worst case. For example, if $n$ is odd and the two robots at the endpoints of the array both need to traverse the entire array to get to their goals.

\begin{center}
\includegraphics[height=95pt]{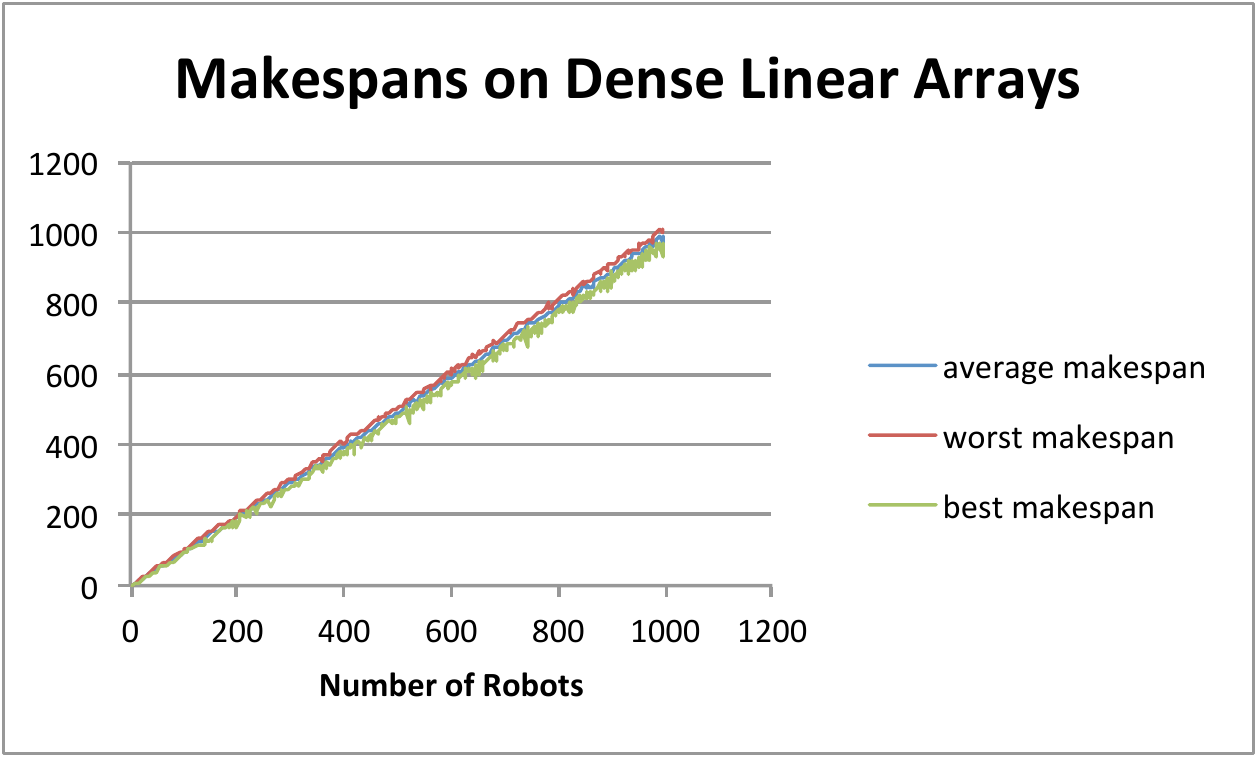}
 \captionof{figure}{Best, worst, and average makespans for the dense linear array tests.}
\end{center}

\par The runtime per timestep appears to be quadratic. 
In the worst case, we would expect a quadratic number of seconds per timestep -- this could come from any of the inner while loop, resolving cycles, and swapping. Indeed
\begin{enumerate}
\item We could move $\theta(k)$ robots, and have to loop through all $k$ each time. This is improved to linear time in the optimizations section using inverse chains. (Of course, with $k=n$, we only ever resolve cycles and perform swaps.)
\item We could build a length $\theta(k)$ chain for each robot, only to have it terminate at a robot that has already moved or is at its goal. This is also improved to linear time using inverse chains.
\item It can take $\theta(n)$ steps to check if two paths are subsets. For example, if both paths are length $\theta(n)$ and only differ in one node. Presumably we could record swap risks and check when they might become swap needs, but we do not know a good way to do this.
\end{enumerate}

\begin{center}
\includegraphics[height=\graphheight pt]{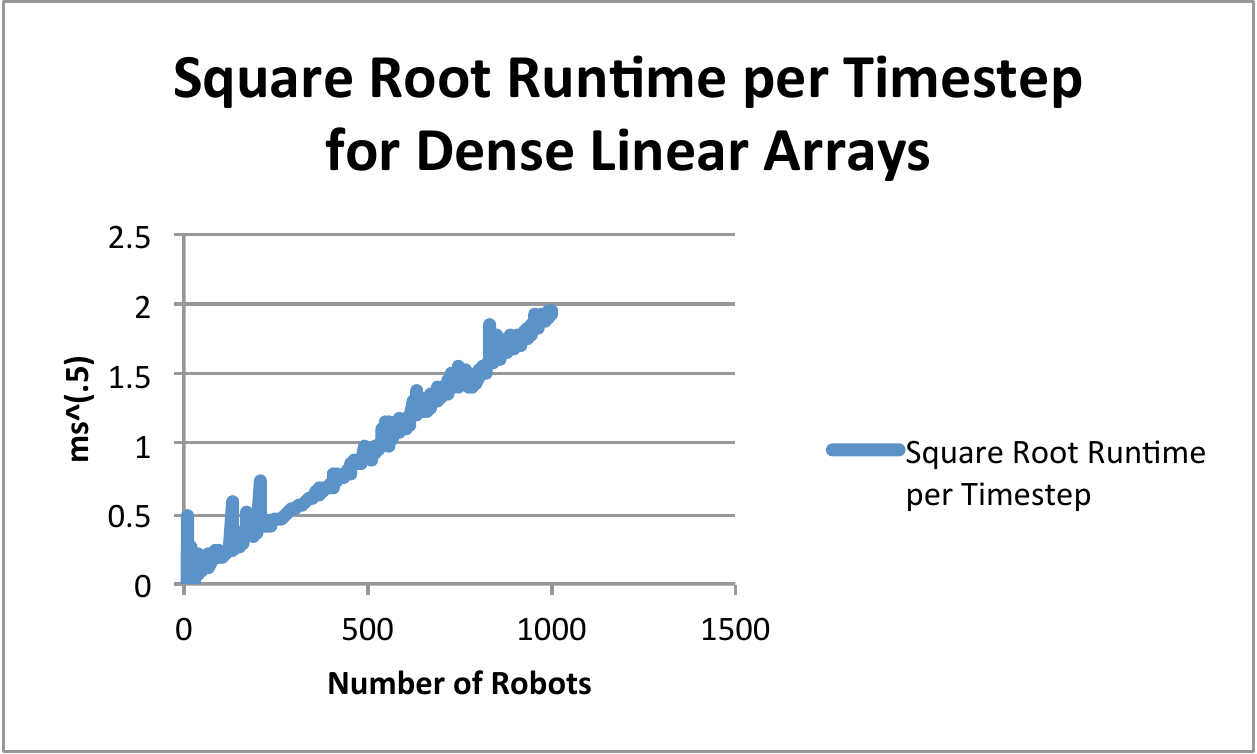}
 \captionof{figure}{Square root of the average runtime per timestep for the tests on the dense linear arrays. Note that there are several small spikes present due to the fact that my computer was running other programs during the tests.}
\end{center}


\subsection{Square Arrays}
We test on randomly generated square arrays of length $\sqrt{n}=\sqrt{k}$ for $\sqrt{n}=1,\dots,50$, and run ten tests for each length. Superficially, the makespan graphs appear linear with the number of robots. However, RIP produces makespans that are consistently greater than the number of robots. In fact, inspecting the data suggests the makespans are superlinear. The makespan divided by the number of robots produces a nonconstant, and possibly logarithmic, curve. This is surprising for two reasons:
\begin{enumerate}
\item We could also form a hamiltonian path on the square grid. From the tests on linear arrays, we are convinced that this would produce a makespan slightly less than the number of robots.
\item Shearsort sorts square arrays in 
$2\text{lg}n\sqrt{n}$ serial timesteps -- much faster than for linear arrays.
\end{enumerate}
More precisely, shearsort takes $2\lceil \text{lg}n\rceil$ times the amount of time it takes to sort a single row, which is $\sqrt{n}$ timesteps with parallel bubblesort. For the sake of simplicity, we ignore the ceiling in my graphs.
See http://tinyurl.com/zue7shp for a brief description and proof of completeness for shearsort.

\begin{center}
\includegraphics[height=95pt]{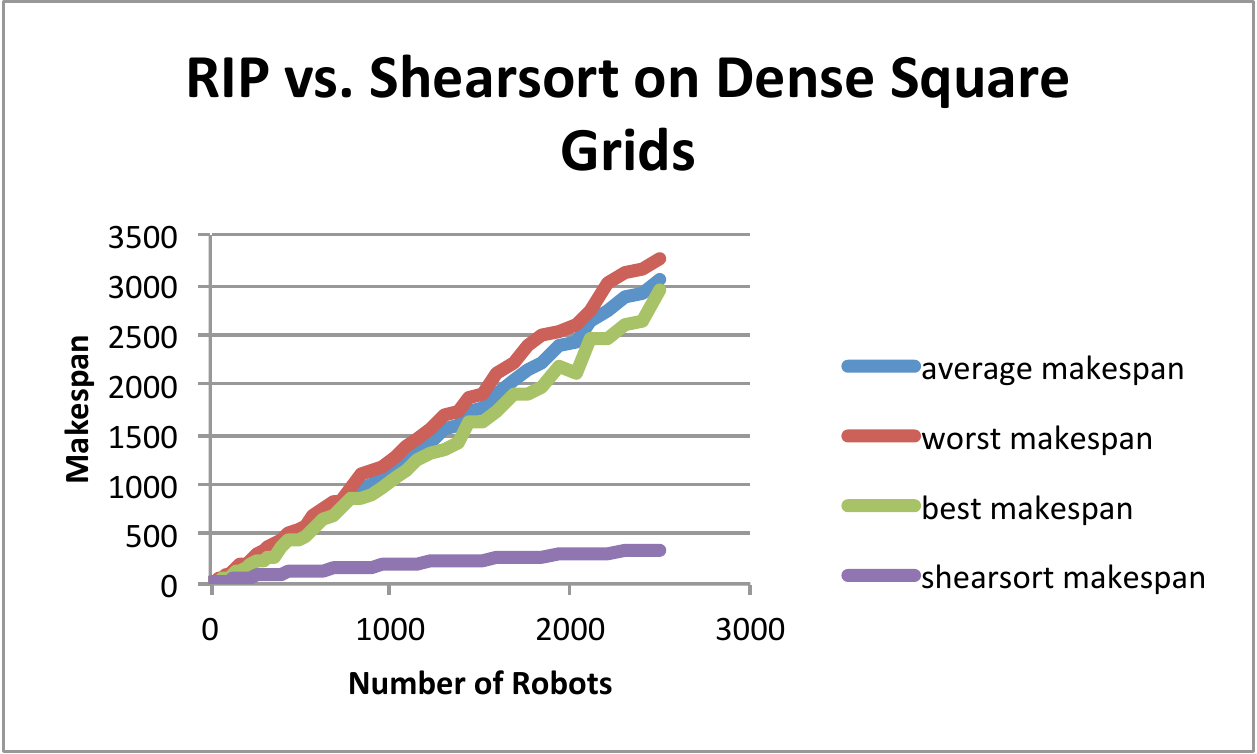}
\includegraphics[height=95pt]{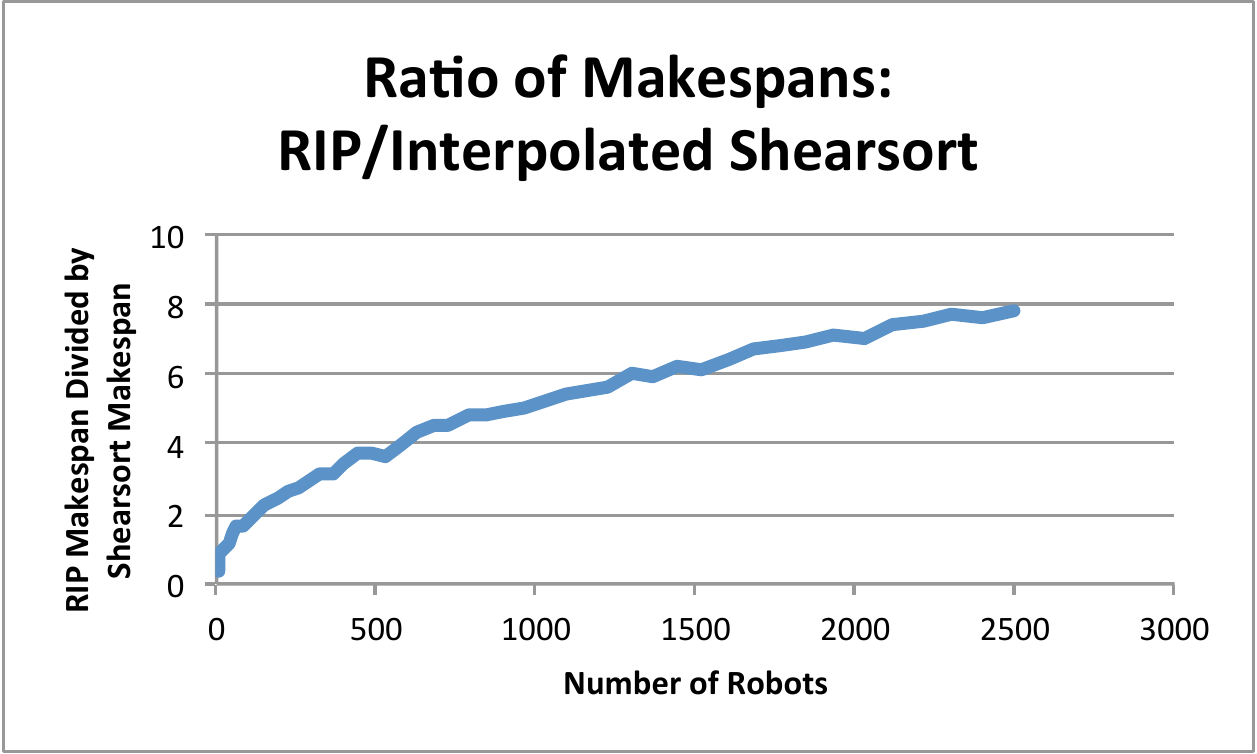}
\includegraphics[height=95pt]{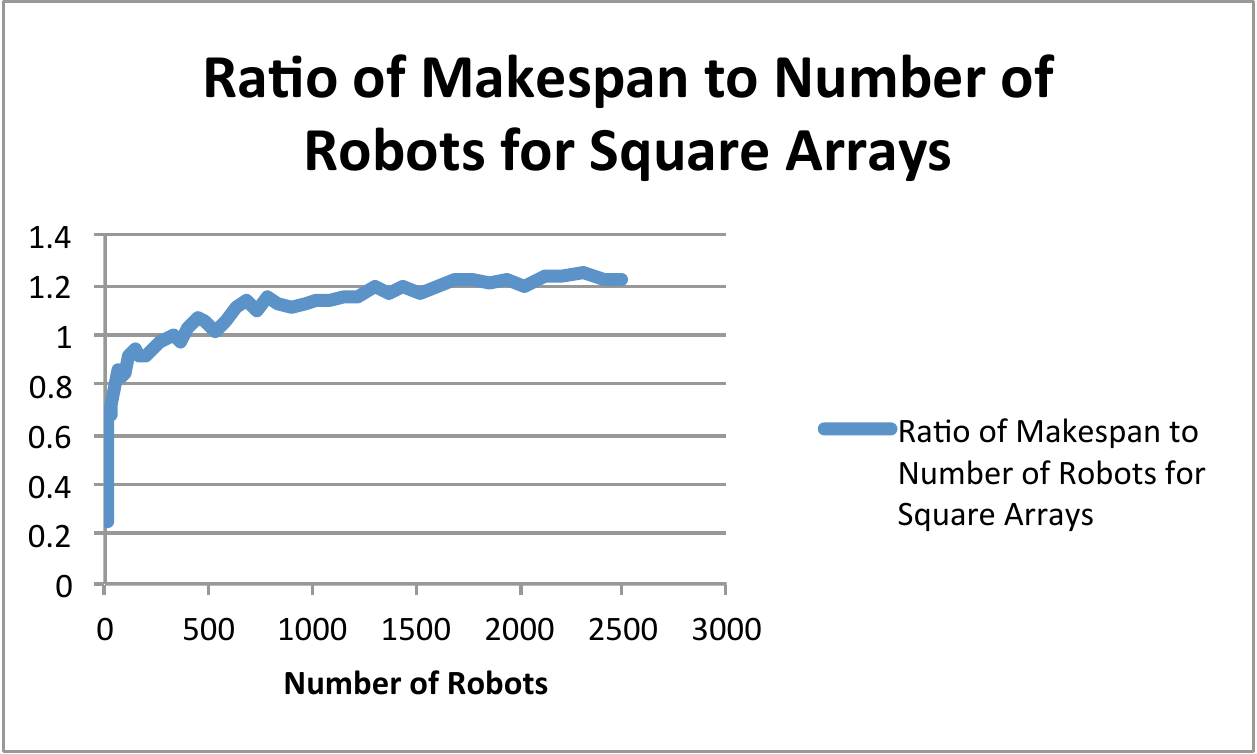}
 \captionof{figure}{The first two graphs are a comparison between RIP and shearsort. The third graph shows that the makespan produced by RIP grows superlinearly.}
\end{center}

\par The (square root) runtime per timestep is almost exactly the same as in the linear array tests. The square array tests are about $10\%$ faster per timestep at the $1000$-robot mark. This is directly offset by the fact that the square array makespans are about $10\%$ larger at the $1000$-robot mark.

\begin{center}
\includegraphics[height=95pt]{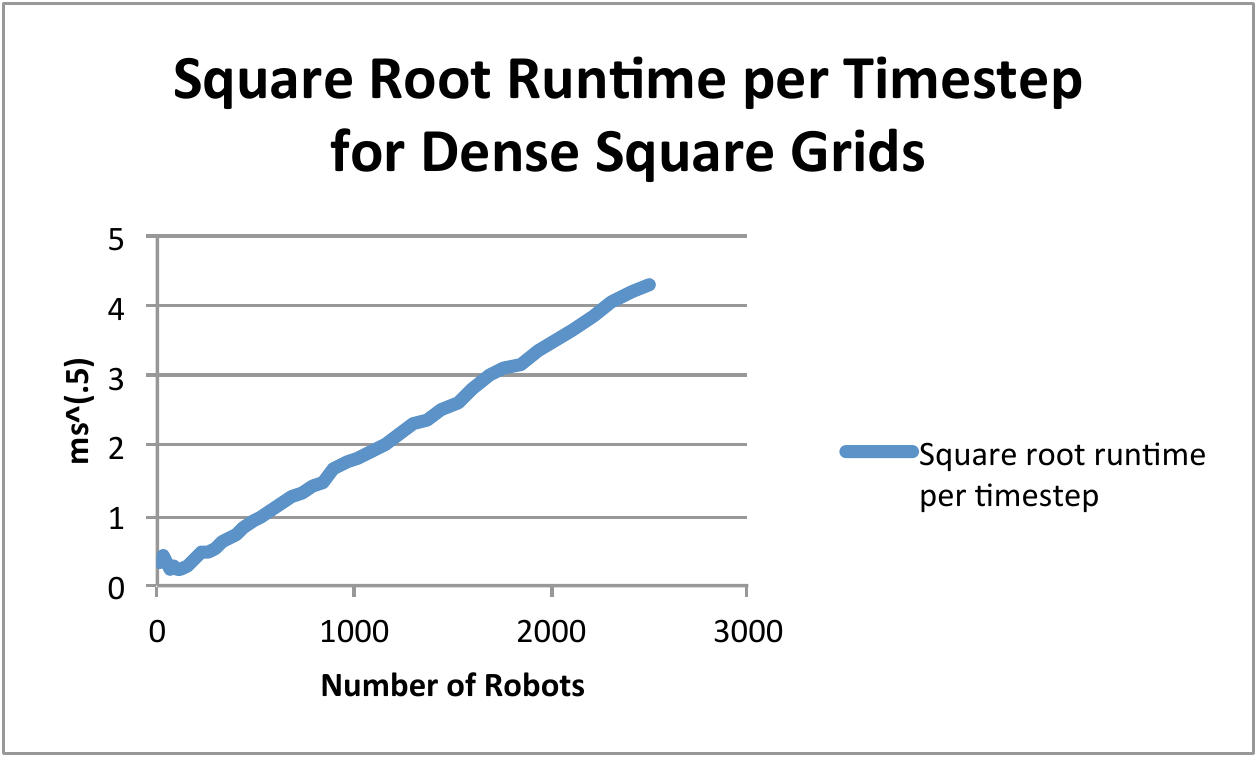}
 \captionof{figure}{Square root of the average runtime per timestep for the tests on the dense square arrays.}
\end{center}

\par The number of swaps also appear super linear. However, it grows much more slowly than $\binom{n}{2}/2$ which is the expected number necessary swaps for dense linear arrays (every pair of robots has a $1/2$ chance of being in the wrong order.)

\begin{center}
\includegraphics[height=95pt]{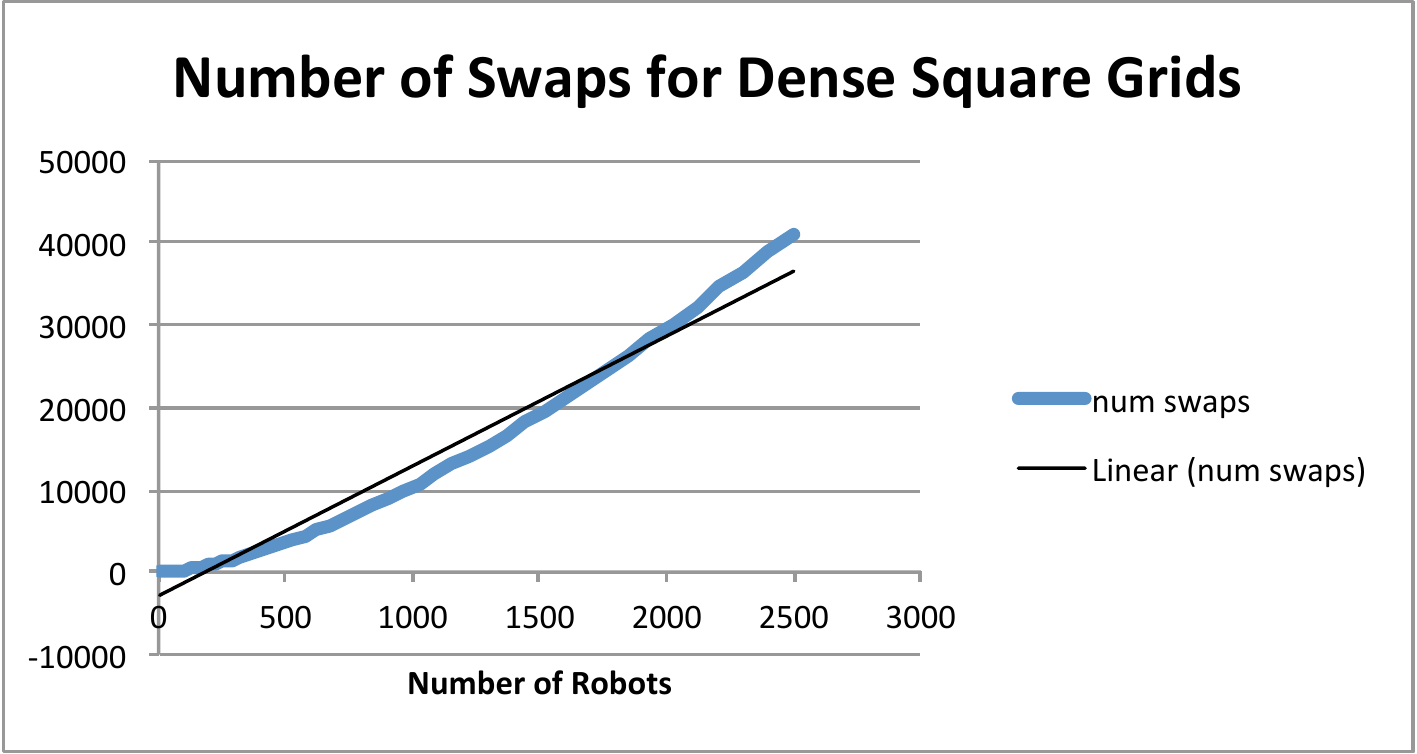}
\includegraphics[height=95pt]{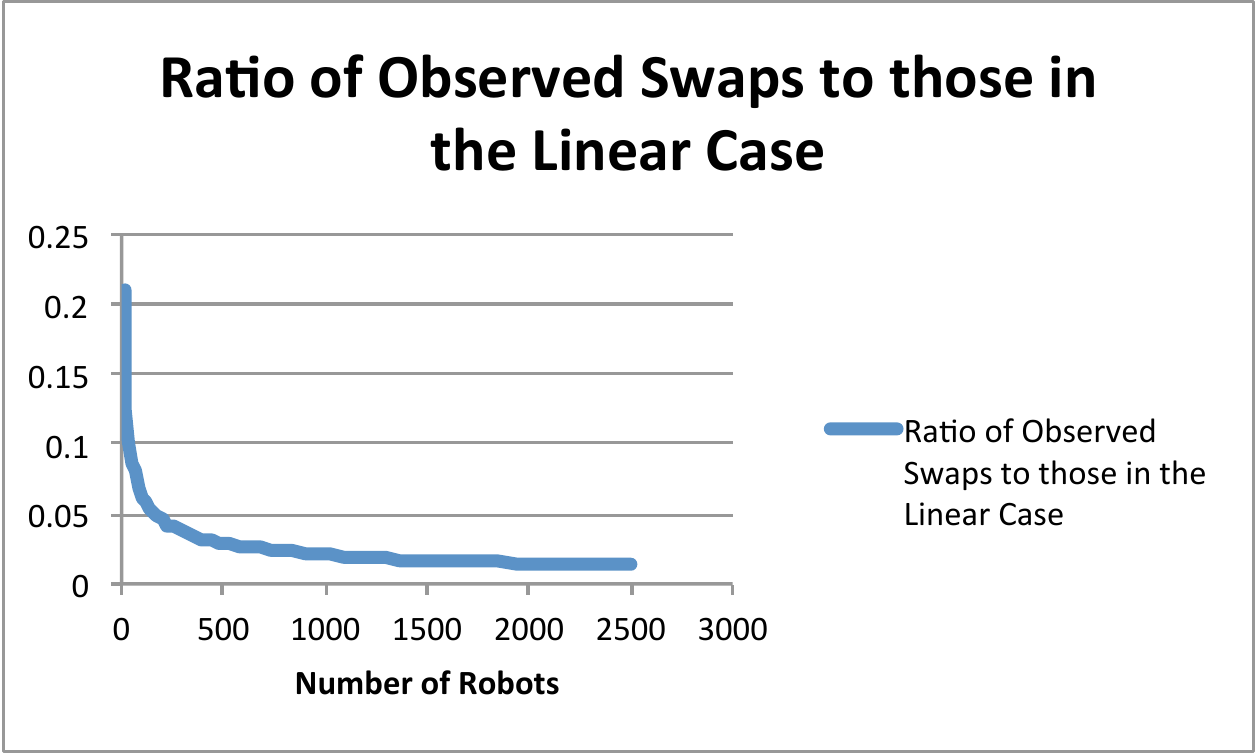}

 \captionof{figure}{The number of swaps appear to grow superlinearly but subquadratically.}
\end{center}


\par It should also be noted that I had to manually remove fewer than ten outliers total for the average runtimes for the linear and square tests. I was using my computer while running the tests, and there were several times when the test runtime exploded by several orders of magnitude while my computer processed a burdensome but unrelated task.

\section{Optimizations}
\subsection{RIP Optimizations}

\subsubsection{Inverse Chains}
We can perform all the advances in the inner while loop of RIP in a single pass, and we can combine it with the cycle phase as well. 
The idea is to partition the robots into doubly linked maps from each robot to the robot occupying the node it wants to go to and vice versa. We can get the exact same behavior as in RIP as follows: Consider a single such map. Form a maximal list by starting at a random robot and going as far in the forward direction as possible. If going forward this way ends with a robot at its goal, then go on to the next inverse chain; that goalie robot will be swapped in the subset-swapping phase (or else that robot is the entire inverse chain). If going forward this way forms a cycle, then resolve the cycle. If going forward this way terminates at an empty node, then build a maximal chain by working backwards from the empty node, always selecting the robot whose index is the least greatest than the current robot's index, mod $k$ (the empty node is treated as index $0$.) Advancing that chain removes the need to loop. 
All told this only take $O(k)$ operations per timestep. For contrast, the original inner while loop in the basic implementation of RIP takes $\theta(k^2)$ runtime steps in the worst case, and checking for cycles can take $\theta(k^2)$ runtime steps as well (for example, if the inverse chain is linear).
\par We can also use inverse chains to get makespan optimizations. If there are no cycles in an inverse chain, then we potentially have many maximal chains to choose from. Intuitively, it makes sense to advance the longest chain. Another idea is to advance the chain with the robot that still has the farthest to travel. We can also build out a chain by starting at the empty node and, at each choice when working backwards, choosing the robot that still has the farthest to go. Choosing robots this way can have the advantage of eliminating swap risks. There are some maps in which it can cut the makespan in half. 
\par For example, in the basic implementation of RIP, say all the robots are vying for the same node $v$, and the rest of their paths are linearly nested, with $P_{i,1},\dots,P_{i,\text{end}}\subset P_{i+1,1},\dots,P_{i+1,\text{end}}$. The proof of completeness did not depend on the order of the inner while loop, the subset-swaps, and resolving cycles. Imagine that we perform the subset-swaps before the inner while loop. How would this play out? First $r_1$ would enter, only to be subset-swapped by $r_2$ in timestep $2$, which would be subset-swapped by $r_3$ etc. Finally $r_k$ would enter $v$ at time $k$, and then go off to $g_k$. Meanwhile, the entire process would have to start over again until $r_{k-1}$ could enter $v$. Overall, it would take $k(k-1)/2$ timesteps just to resolve the massive bottleneck at $v$. Indeed, if the paths are made as short as possible, then this family of examples gives a $\theta(\text{SIC}+k^2)$ makespan with basic RIP. Of course, we could resolve it in $k$ steps by using any of the various inverse chain optimizations. If we perform the inner while loop before the subset-swaps, then small tests produce makespans between $2k$ and $3k$. So, compared to the baseline implementation of RIP, inverse chains do not give a dramatic improvement on small tests.
\par We have only implemented the basic version of RIP, however, and cannot compare what effect these optimizations would have on larger and more typical grid maps.

\begin{algorithm}[!htbp]
\small
\caption{RIP with Inverse Chains}
\label{RIP with Inverse Chains} 
    \For{i=1,\dots,k}
    {
    	$P_{i,0} = shortestPath(s_i,g_i)$\;
    }
    \While{some agent is not at its goal}
    {
        partition robots into inverse chains\;
		\For{each inverse chain} {
		\if{the inverse chain contains a (nontrivial) cycle}
		{move all the robots in the cycle\;
		}
		\else{
        advance all the robots in one maximal chain\;}
        }
 	\For{$i=1,\dots,k$}
  	{	
    	\If{$r_i$ is unmoved and there is an unmoved robot $r_j$ occupying $r_i$'s next node}
        {
          \If{$P_{j,t}\subset P_{i,t}$}
          {
          swap($r_i,r_j$)\;
          }
    	}
   	}
}

\end{algorithm}

\subsection{Bubbletree Optimizations}
\subsubsection{Forming the tree}
There are various characteristics that we might want the bubbletree to have
\begin{enumerate}
\item Distance ``preserving'', especially between pairs of starts and goals
\item Low diameter
\item Low number of nodes
\end{enumerate}
\par 
 Unfortunately, it is impossible to restrict to a tree while preserving distances to within a constant factor, as is demonstrated by polygon graphs.
\par (Hassin and Tamir 1995) presents a polynomial method to form a spanning tree of (nearly) minimum diameter. Simply form a breadth-first-search tree from the ``one-center'', the node whose max distance $d$ to all other nodes is minimized. It is clear that the minimum diameter must be at least $d$. And the breadth-first-search tree has diameter at most $2d$.
\par For minimizing the number of nodes, it is useful to use Steiner trees. For a weighted-graph $G$ with a subset $S\subset V$ of the vertices, the generalized Steiner tree problem is to find a subtree $T$ containing all the vertices in $S$, such that the sum of the edge weights is minimized. It is possible to approximate the minimum to within a factor of $\text{ln}(4)$ (Byrka et al. 2010), although approximating to within a factor of $96/95$ is NP-hard (Chleb{\'i}k et al. 2008) 

\subsubsection{Bubble Forest}
\par We can get an even bigger improvement, especially on maps with relatively few robots, by forming a bubble forest. (Lopez 2006) presents a Steiner forest, which can be found in polynomial time by solving a dual linear program and is able to 
\begin{itemize}
\item Ensure each $\{s_i,g_i\}$ pair is in the same tree
\item minimize the total number of edges to within a factor of $2$ (in fact, the algorithm works more generally for edges with positive-rational length.) 
\end{itemize}
Of course, minimizing the number of edges also does a lot to minimize the number of nodes, as the typical PERR formulation is on a graph with unit-cost edges.


\subsubsection{Goal-balancing $v_m$}
To get a linear makespan, we need to show that the problem shrinks in size by a constant factor (half) with each recursive call. In the basic bubbletree algorithm, the main measure of the problem size is the number of nodes. To prove the $O(k+l)$ result, however, we need to express the problem size in terms of $k$. Thus we choose the mid node $v_m$ such that each of the subtrees defined by its children has at most $k/2$ goal vertices. The procedure is very similar to that in basic bubbletree: perform a breadth-first search from an arbitrary vertex $v$, and mark every vertex $v'$ with the number of goal vertices in the subtree it defines, $g(v')$. If $g(c)\leq k/2$ for every child $c$ of $v$, then we are done. Otherwise let $d$ be the unique goal node for which $g(d)>k/2$. Then $d$ is a satisfactory mid node.

\subsubsection{Moving all Migrant Sets}
In the base algorithm, we perform two main (non-recursive) operations: advancing the focal $M_c$ in priority order, and advancing a chain in $G^T_c$ with root $v_m$. We can simultaneously move other migrants, subject to ceding priority to the focal $M_c$ and then chain, and maintaining the priority orderings within each of the migrant sets. 
\par Refer to bubbletree with the joint optimizations of goal-balancing together with moving all migrant sets as bubbletree2.
\par \begin{proposition}  Bubbletree2 produces a makespan of $O(\text{diam}*\text{log}k+k)$. 
\end{proposition}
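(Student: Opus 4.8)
The plan is to charge the whole makespan to the recursion tree of subproblems, showing that the ``depth'' contributions sum to $O(\text{diam}\cdot\log k)$ while the ``traffic through $v_m$'' contributions sum to $O(k)$. Two facts drive everything: goal-balancing forces geometric decay of the problem size, and the parallel recursion together with the ``move all migrant sets'' optimization lets me replace a sum over the children of $v_m$ by a maximum. First I would pin down the recursion depth. By goal-balancing, a subproblem with $K$ goals splits into children each having at most $K/2$ goals; since after a pre-recursive phase every robot sits in the subtree containing its goal, the robot count of a subtree equals its goal count, so a subproblem at recursion level $j$ contains at most $k/2^{j}$ goals, hence at most $k/2^{j}$ robots and at most $k/2^{j}$ migrants. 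Once $k/2^{j}<1$ the subproblem is trivial, giving $O(\log k)$ levels.

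The core estimate is that the pre-recursive phase of a single subproblem $S$ finishes in $O(\text{depth}(S)+M_S)$ timesteps, where $\text{depth}(S)=\max_c\text{depth}(T_c)$ and $M_S=\sum_c|M_c|$ is the number of migrants in $S$. The $M_S$ term is the $v_m$ bottleneck: every migrant must be swapped across the single node $v_m$, and the focal mechanism guarantees one such crossing per productive timestep, so at most $M_S$ crossings are ever needed. The $\text{depth}(S)$ term is exactly where moving all migrant sets is essential: in basic bubbletree only the focal $M_c$ advances, so walking a first robot up to $v_m$ is paid separately for each child and one incurs $\sum_c\text{depth}(T_c)$ (up to $d_{deg}\cdot\text{depth}$); moving every $M_{c'}$ in the background, subject to ceding priority to the focal set, lets all children draw their robots toward $v_m$ simultaneously, so the depth is paid once, in parallel. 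I would make this precise by upgrading the priority invariant of basic bubbletree (after $t$ focal timesteps the $i^{th}$ robot of $M_c$ is within $\text{depth}(T_c)+2i-t$ of $v_m$) to a version that also credits background steps, so that by the time $M_c$ becomes focal its robots have already been pulled toward $v_m$ and $M_c$ drains in $O(|M_c|)$ focal timesteps rather than $\text{depth}(T_c)+O(|M_c|)$; summing focal timesteps over the children yields $M_S$, and the single shared warm-up yields $\text{depth}(S)$. The motion of $r_m$ is dominated by the same quantities: by the earlier bound it costs at most $d(s_m,v_m)+2\sum_c|G^T_c|\le\text{depth}(S)+2M_S$ timesteps inside $S$, so it is absorbed.

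Finally I would assemble the levels. Subproblems sharing a level occupy disjoint subtrees and are run in parallel, so the time charged to level $j$ is the maximum, not the sum, over its subproblems: $\max_S\bigl(\text{depth}(S)+O(M_S)\bigr)\le \text{diam}+O(k/2^{j})$, using $\text{depth}(S)\le\text{diam}$ and $M_S\le k/2^{j}$. Summing over the $O(\log k)$ levels,
\[
\sum_{j=0}^{O(\log k)}\Bigl(\text{diam}+O(k/2^{j})\Bigr)=O(\text{diam}\cdot\log k)+O\!\Bigl(k\sum_{j\ge0}2^{-j}\Bigr)=O(\text{diam}\cdot\log k+k),
\]
which is the claim; the geometric series is precisely what collapses the per-level $k$ into a single $k$.

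The main obstacle is the per-subproblem estimate of the second paragraph, namely proving the $O(\text{depth}(S)+M_S)$ bound when all migrant sets move at once. Everything else is bookkeeping over the recursion tree, but the depth-parallelization needs a genuinely new invariant: I must verify that letting non-focal sets advance never disrupts the focal set's guaranteed crossing at $v_m$ (priority is ceded) and never violates completeness, while simultaneously crediting each migrant for the progress it makes toward $v_m$ on turns when its set is not focal, so that the total focal time is $\sum_c O(|M_c|)$ plus one shared $\text{depth}(S)$ warm-up rather than $\sum_c\text{depth}(T_c)$.
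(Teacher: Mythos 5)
Your proposal follows essentially the same route as the paper: a per-level pre-recursive bound of $O(\text{diam}+k_j)$ obtained from an upgraded priority invariant, combined with goal-balancing to get $O(\log k)$ levels whose $k_j$ contributions collapse geometrically into a single $O(k)$. The invariant you defer as the ``main obstacle'' is precisely the paper's lemma --- after $t$ timesteps the $i^{th}$ priority robot in any $G^I_c$ is an infiltrator or within $\max\{\text{diam}+2i-t+2k,\; i+1\}$ of $v_m$ --- which the paper proves by the same induction as basic bubbletree (snubbing and bully-swap cases) plus one new case: a migrant stymied by the occupant of $v_m$ is already within one step of $v_m$, which is where the $i+1$ term comes from.
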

\begin{proof} The proof uses the following lemma. 
\begin{lemma} After $t$ timesteps, the $i^{th}$ highest priority robot in any $G^I_c$ (the highest priority is $i=0$) is already an infiltrator or else is at most $\text{max}\{\text{diam}+2i-t+2k,i+1\}$ steps away from $v_m$.
\end{lemma}
\begin{proof}
$2i-t$ is as in the proof of the guarantee for the basic bubbletree algorithm, $2k$ comes from the fact that we spend at most $2k$ timesteps moving $r_m$, and $i+1$ comes from the new possibility that a line of migrants is waiting for $v_m$ to empty. We can thus ignore the $2k$ term. The induction proceeds with the exact same cases as in the proof of the guarantee for basic bubbletree. Clearly the base case holds. We will reproduce the rest of the argument here.
\par Assume we have the stated invariant at timestep $t$ (and ignore the $2k$ term). It is still the case that infilitrators are never pulled back across $v_m$. Consider what it means if some migrant $r_i\in M_c$ that has not yet gotten to $v_m$ fails to advance. There are the same two old possibilities, and also a new possibility. The first is if a higher priority robot $r_j$ swaps another robot to the node that $r_i$ was planning on advancing to. Then $r_i$ is only one step worse off than $r_j$ was the turn before at $t-1$. The guaranteed proximity to $v_m$ for $r_j$ was $\text{max}\{\text{depth}(T_c)+2(i-1)-(t-1),(i-1)+1\}=\text{max}\{\text{depth}(T_c)+2i-t-1,i\}$. So the guarantee for $r_i$ is  $\text{max}\{\text{depth}(T_c)+2i-t,i+1\}$ as desired. The second is that a higher priority robot $r_j$ bully-swapped it. This bodes even better for $r_i$ than if it is snubbed, and the guaranteed closeness to $v_m$ is again maintained. The new possibility is if $r_i$ is stymied by the robot occupying $v_m$. Then $r_i$ is at most one step from $v_m$, and we are good.
\end{proof}
By the lemma, every migrant is within $k$ of $v_m$ after $\text{diam}+4k$ timesteps. We can then invoke the basic bubbletree guarantee, since the optimization honors the priorities from basic bubbletree. The effective depth of each subtree $T_c$ is at most $|M_c|$, so we can finish off the pre-recursive phase in at most $\sum_{c}|M_c|+\sum_{c}2|M_c|+2k$ more steps (recall the guarantee from the basic analysis is $\text{depth}(T_c)+2|M_c|$ per tree $T_c$, plus $2k$ steps for $r_m$.) So the pre-recursive phase takes at most $\text{diam}+10k$ timesteps.
\par When we recurse, the number of robots decreases by at least half. Unfortunately, we have no such guarantee on the diameter (or the longest path length). Thus we solve the recursion $f(\text{diam},k) \leq O(k+\text{diam})+f(\text{diam},k/2)$ and get $f(\text{diam},k) = O(\text{diam}*\text{log}k+k)$ as claimed.

\end{proof}

\subsubsection{Advancing Infiltrators}
\par We can also advance robots that are already in their target trees. We refer to these non-migrants as infiltrators (they include all non-migrants other than $r_m$.) Since the infiltrators have the lowest priority, it is not very important how to move them. A simple solution is to initialize an ordering on each $G^T_c$ and move the infiltrators in priority order. A more complicated approach is to try move the infiltrators in $T_c$ out of the way of the migrants in $T_c$.
\par  What we see as the most natural approach is to preemptively do the actions that would be taken after the pre-recursive phase. Let $T^i_c$ denotes a subtree in an $i^{th}$ recursive call of bubbletree. Thus there is one $T^0_c$, and it is the entire bubble tree. The $T^1_c$'s are the familiar subtrees defined by the children of $v_m$. We say a $T^i_c$ is order $i$. We can similarly define the higher order migrant set $M^i_c$ as the robots in $T^i_c$ whose goals are not in $T^i_c$. This way we have $M^i_c\subset M^{i+1}_{c'}$ when $c'$ is a descendant of $c$. In each timestep, loop through the $T^i_c$'s from lowest to highest order, and move the robots in $T^i_c$ as in bubbletree2, treating robots in strictly lower-order migrant sets as immovable obstacles. There are only $O(k)$ $T^i_c$'s, so this optimization is not excessively time-consuming. 
\par Using the additional optimization of advancing infiltrators to bubbletree2 gives an $O(k+l)$ solution to the counterexample that proved the necessity of the $\text{log}k$ factor in bubbletree2. However, we do not know if it gives an $O(k+l)$ solution in all cases. In particular, the longest path to goal (or the diameter) need not decrease by any constant factor in the first recursive call. For example, they would not decrease if there was an order $1$ robot $r_1$ whose path to goal was $l$ (with $l$ nearly the size of the diameter): we could start with $r_1$ at a grandchild of $v_m$, and engineer it so that there is always an order $0$ robot occupying that grandchild's parent, or else vying for it.



\subsubsection{Recomputing Bubble Trees}
When we recursively call bubble tree on a subtree, we can instead pass the nodes of the tree and compute a new bubble tree (or forest) on those nodes. 

\subsubsection{Earlier Recursive Calls}
We can also call bubbletree on each $T_c$ as soon as the robots occupying $T_c$ are precisely the robots in $G^T_c$.


\subsubsection{Leafhunter Hierarchy} We can be intelligent about how we assign priority within each $G^I_c$ (and each $G^T_c$), rather than doing it randomly. We can require higher priority for robots that will have farther to go after becoming infiltrators. More minimally, we can insist on higher priority for $r_i$ than $r_j$ if $g_i$ is a descendant of $g_j$. 
\par Using the leaf-hunter hierarchy also gives a means for smarter chain resolution. In the base algorithm, we choose the chain to advance randomly. If we prioritize leaf hunters, then at least the structure of these chains will not be completely random. Rather, the deeper members of these chains are more likely to be leaf-hunters, and therefore be relatively better off heading the push into the depths of the subtree.

\subsubsection{Postprocessing} With some of these more complicated optimizations (in particular, advancing infiltrators), there is a risk that we introduce redundant movement. For example, we may advance an infiltrator up to $v_m$, only to have it swapped back the next turn by a higher-priority migrant. There is a redundancy precisely when $r_i$ occupies $v$ at times $t$, vacates $v$ at time $t+1$, and returns at time $t+c$ without any other robot occupying $v$ in the meantime. A general path-planning optimization is to go through in postprocessing and replace this redundant traveling with waits (cite the push and swap paper.) It may be necessary to loop through the solution multiple times, as new moves may become redundant after deleting old redundant moves.





\section{Comparison Between RIP and BubbleTree}
RIP is easier to implement. 
RIP can also outperform bubbletree on cyclic graphs e.g. on the cycle graph at the end of the basic bubbletree makespan analysis, a graph on which RIP performs optimally. RIP also performed near-optimally in the DAO map, the 20x15 grid maps, and the dense linear arrays. 
\par Extrapolating RIP's performance on the dense square grids suggests that bubbletree would outperform RIP on robot-dense two-dimensional arrays. Bubbletree also has much stronger theoretical guarantees.

\section{Discussion}
We went through a lot of work to show that RIP produces a makespan that is $O(\text{SIC}+k^2)$. Empirically, RIP performs much, much better. And we have not found any examples that are $O(\text{SIC}+k^2)$ but not $O(k^2)$. In fact, it could still be that optimizations of RIP obtain an $O(n)$ or even an $O(k+l)$ makespan! (Where $l$ is the maximum distance from start to goal.)
\par Why is it so difficult to prove a stronger result? It is very natural to modify the potential function. For example, we could include terms for the number of times pairs of paths intersect, or we could somehow amortize how often a robot is stymied by another robot, including stymied by a robot at the end of its chain. A natural conjecture is that $r_i$ only swaps $r_j$ once. But we can in fact engineer scenarios in which $r_i$ swaps $r_j$ arbitrarily many times. It is sufficient to show that we can pull $r_i$ and $r_j$ off of their goals and force them to subset-swap. To manage this, start with $r_i$ at $(0,0)$ and $r_j$ at $(1,0)$ and with two robots above $r_j$. Have those two robots subset-swap $r_j$ up to $(1,2)$, and then have a stream of robots file through $(1,1)$, preventing $r_j$ from moving. We can then have three robots pull $r_i$ over to $(3,0)$. We can then free $r_j$ to its goal. But now there is a swap need $(i,j)$. How to amortize these machinations is beyond me.
\par Maybe $r_i$ can keep a labeled tally of the times it has been stymied, and share that tally with any robot it subset-swaps. Intuitively it makes sense that, if $r_i$ swaps $r_j$ at time $t$, then $r_j$ is at least as well off at time $t+1$ as $r_i$ was guaranteed to be at time $t$. But what exactly to tally, and how to tally it, is unclear. 
\par Another approach is to use inverse chains. We can modify RIP so that the largest subchain of an inverse chain is guaranteed to advance every other timestep. But what if the trees are shallow and branch rapidly? Another approach is to select the subchain containing the robot that still has the farthest to go -- this way the maximum distance to goal can be tracked and shown to consistently decrease. But, there can be many robots that are the same distance to goal, and we can only be guaranteed to advance one of them -- and only every other turn, at that. 
\par Really what we need is a linear measure of global progress. Of course, this is the motivation for trying a different algorithm, like bubbletree.

\section{Acknowledgments}
Thank you to USC's SURE program for funding this research. Thank you to Guni Sharon for making the code of his CBS solver available to us. Thank you to Sven Koenig and Hang Ma for working with me in their lab, including: presenting me with PERR and the pseudocode for RIP, as well as the idea of using a monovariant; giving me an overview of their prior research; helping with the code and Excel; talking through my proofs and ideas with me; and securing funding through SURE.



\begin{thebibliography}{14}
 
 



\bibitem{Szemderedi}
Ajtai, M., Komlós, J., Szemerédi, E., \textit{An $O(n\text{log}n)$ Sorting Network}. Published in: STOC '83 Proceedings of the fifteenth annual ACM symposium on Theory of computing. Pages 1-9.
\bibitem{Steiner tree 1}
Byrka, J.; Grandoni, F.; Rothvoß, T.; Sanita, L. (2010). \textit{An improved LP-based approximation for Steiner tree}. Proceedings of the 42nd ACM Symposium on Theory of Computing. pp. 583–592. 
\bibitem{Steiner tree 2}
Chleb{\'i}k, Miroslav, and Janka Chlebíková. \textit{The Steiner tree problem on graphs: Inapproximability results}. Theoretical Computer Science 406.3 (2008): 207-214.
\bibitem{Veloso}
Coltin, B., and Veloso, M. 2014. \textit{Scheduling for transfers in pickup
and delivery problems with very large neighborhood search}. In
AAAI Conference on Artificial Intelligence, 2250-2256.
\bibitem{odd-even sort}
Haberman, N. (1972) \textit{Parallel Neighbor Sort (or the Glory of the Induction Principle)}, CMU Computer Science Report (available as Technical report AD-759 248, National Technical Information Service, US Department of Commerce, 5285 Port Royal Rd Sprigfield VA 22151).
\bibitem{one-center}
Hassin, R., and Tamir A. 1995. \textit{On the minimum diameter spanning tree problem}
Inform. Process. Lett., 53 (2), pp. 109-111
\textit{Kornhauser}
Kornhauser, D.; Miller, G.; and Spirakis, P. 1984. \textit{Coordinating
pebble motion on graphs, the diameter of permutation groups, and
applications}. In Annual Symposium on Foundations of Computer
Science, 241-250.
\bibitem{shearsort} Lang, \textit{Sorting on two-dimensional processor arrays} H.W. tinyurl.com/zue7shp. Created: 29.11.2004. Updated: 26.01.2008. Date of Access August 20, 2016. 
\bibitem{steiner forests}
Lopez, Adriana. \textit{Steiner Trees and Forests}. Massachusetts Institute of Technology 
18.434: Seminar in Theoretical Computer Science. tinyurl.com/jonczgr. Published March 7, 2006. Date of August 20, 2016.
\bibitem{push and swap} Luna, R., and Bekris, K. 2011. \textit{Push and swap: Fast cooperative
path-finding with completeness guarantees}. In International Joint
Conference on Artificial Intelligence, 294-300.
\bibitem{PERR} Ma, H.; Tovey, C.; Sharon, G.; Kumar, T. K. S.; and Koenig,
S. 2016. \textit{Multi-Agent Path Finding with Payload Transfers and the
Package-Exchange Robot-Routing Problem}. In AAAI Conference on
Artificial Intelligence.
\bibitem{bubble sort} Muhammad, Rashid. \textit{Bubble Sort} tinyurl.com/damu92. Date of Access August 20, 2016.
\bibitem{cbs} Sharon, G.; Stern, R.; Felner, A.; and Sturtevant, N. 2015. \textit{Conflict-Based
Search for Optimal Multi-Agent Pathfinding}. Artificial Intelligence
219:40-66.
\bibitem{parallel sorting} Silva, Fernando. \textit{Parallel Algorithms -- Sorting}. tinyurl.com/jegusez. Date of Access August 20, 2016.
\bibitem{benchmarks} Sturtevant, N. 2012. \textit{Benchmarks for Grid-Based Pathfinding}.
Transactions on Computational Intelligence and AI in Games
4(2):144 - 148.
\par 

\end{thebibliography}
\end{document}